\renewcommand{\emph}[1]{{\textit{ #1}}}
\newtheorem{theorem}{Theorem}
\newtheorem{clm}{Claim}
\newtheorem{col}[theorem]{Corollary}
\newtheorem{prop}[theorem]{Proposition}
\theoremstyle{definition}
\newtheorem{problem}{Problem}
\newenvironment{inproof}{{\it Proof.}}{\hfill$\blacksquare$ \medskip}
\newcommand{\kradius}[1]{\textsc{$#1$-Radius Sequence}\xspace}
\newcommand{\kcover}[1]{\textsc{$#1$-Cover Sequence}\xspace}
\newcommand{\dist}{\operatorname{dist}}
\newcommand{\mc}{\operatorname{mc}}
\begin{document}

\title{Sequences of radius $k$ for complete bipartite graphs\footnote{Research supported by the Polish National Science Center, decision nr DEC-2012/05/B/ST1/00652. The third author was partially supported by ERC Starting Grant PARAMTIGHT (No. 280152). The extended abstract of this paper was presented on the conference WG 2016 \cite{DLRz-WG}}}

\author{Micha{\l} D\k{e}bski
\footnote{Faculty of Mathematics and Information Science, Warsaw University of Technology, Warsaw, Poland}
\and 
Zbigniew Lonc
\footnotemark[2]
\and
Pawe{\l} Rz\k{a}\.{z}ewski \footnotemark[2]
\footnote{Institute of Computer Science and Control, Hungarian Academy of Sciences (MTA SZTAKI), Budapest, Hungary}
}
\date{}

\maketitle
\begin{abstract}
A \emph{$k$-radius sequence} for a graph $G$ is a sequence of vertices of $G$ (typically with repetitions) such that for every edge $uv$ of $G$ vertices $u$ and $v$ appear at least once within distance $k$ in the sequence. The length of a shortest $k$-radius sequence for $G$ is denoted by $f_k(G)$.
We give an asymptotically tight estimation on $f_k(G)$ for complete bipartite graphs {which matches a lower bound, valid for all bipartite graphs}. 
We also show that determining $f_k(G)$ for an arbitrary graph $G$ is NP-hard for every constant $k>1$.
\end{abstract}


\section{Introduction}

\subsection{$k$-radius sequences}
\label{subsection_kradiussequences}
Suppose we need to compute values of a two-argument function, say $H$, for all pairs of large objects. {In some applications,
the large size of objects mandate that most of those objects must be stored in the
secondary memory.} To compute the values of the function, we need to place these objects in our cache before carrying out the computations. {A good caching order reduces the number of accesses to the
secondary memory.} The cache is limited in size -- it can hold up to $k + 1$ objects at one time. Our task is to provide a shortest possible sequence of (costly) read operations to ensure that each pair of objects will at some point reside in the cache together so that we can compute the values of $H$ for all pairs of objects. This problem appeared in practice in processing large medical images (see Jaromczyk and Lonc \cite{JL}).

The read operation assumes that, if the cache is full, the next object takes the place of one of the objects currently residing in the cache. So far most of the research related to this problem has been concentrated on a special case when we assume that the replacement of objects is based on the first-in first-out strategy. This leads to the concept of a $k$-radius sequence. Let $k$ and $n$ be positive integers and let $V$ be an $n$-element set (of objects). We say that a sequence (with possible repetitions) of elements of $V$ is a {\it $k$-radius sequence} (or has a {\it $k$-radius property}) if every two elements in $V$  are at distance at most $k$ somewhere in the sequence. Observe that short $k$-radius sequences correspond to efficient caching strategies for our problem. Indeed, if $x_1,x_2,\ldots,x_m$ is a $k$-radius sequence, then at time $t$ we load the element $x_t$ and after this loading (for $t\geq k+1$) the cache holds the elements $x_{t-k},x_{t-k+1},\ldots,x_t$. The $k$-radius property guarantees that any pair of elements of $V$ resides in the cache together at some point. We denote by $f_k(n)$ the length of a shortest $k$-radius sequence over an $n$-element set of objects.

The problem of constructing short $k$-radius sequences has been considered by several researchers (see Blackburn \cite{Bl}, Blackburn and McKee \cite{BM}, Chee {\it et al.}~\cite{CLTZ}, D\k ebski and Lonc \cite{DL}, Jaromczyk and Lonc \cite{JL}, Jaromczyk {\it et al.} \cite{JLT}, {Bondy {\it et al.} \cite{BLRz}}).

\subsection{$k$-radius sequences for graphs}

In this paper we consider a more general problem -- we assume that the values of the function $H$ need not be computed for all pairs of objects but only for some of them. Let $V$ be a set of objects and let $G=(V,E)$ be a graph. We ask: what is the smallest number $c_k(G)$ of read operations that guarantees that each pair of vertices adjacent in $G$ resides in the cache together at some point? {We refer the reader to the second paragraph of Section \ref{complexity} for a precise definition of the parameter $c_k(G)$.}

If we assume additionally that the replacement of objects in the cache is based on the first-in first-out strategy, then we get the following generalization of $k$-radius sequences. A sequence (typically with repetitions) of vertices of a graph $G=(V,E)$ is called a {\it $k$-radius sequence for $G$} (or alternatively, it has a {\it $k$-radius property with respect to $G$}) if each pair of adjacent vertices of $G$ appears at distance at most $k$ in this  sequence. More precisely, a sequence $x_1,x_2,\ldots,x_m$, $x_i\in V$, of vertices of a graph $G=(V,E)$ is called a $k$-radius sequence if for each two vertices $u$ and $v$ adjacent in $G$ there are $i$, $j$, $1\leq i,j\leq m$, such that $u=x_i$, $v=x_j$ and $|j-i|\leq k$. We denote by $f_k(G)$ the length of a shortest $k$-radius sequence for the graph $G$. Clearly, assuming the first-in first-out strategy, $f_k(G)$ is equal to the least number of read operations that guarantees that each pair of vertices adjacent in $G$ resides in the cache together at some point. Thus $f_k(G) \geq c_k(G)$.

We will always assume that $G$ has more than $k+1$ non-isolated vertices; otherwise finding $c_k(G)$ and $f_k(G)$ is trivial. If $G$ satisfies this condition, then there is an obvious lower bound for both numbers $c_k(G)$ and $f_k(G)$:
\begin{equation}\label{basic_lower_bound}
f_k(G)\geq c_k(G)\geq \frac{e(G)}{k}+\frac{k+1}{2},
\end{equation}
where $e(G)$ is the number of edges in $G$.

Indeed, consider a strategy that requires $m=c_k(G)$ read operations only and guarantees that each pair of vertices resides in the cache together at some point. Observe that if after loading a vertex the cache stores $j$ vertices, then it contains at most $j-1$ pairs of adjacent vertices which were not together in the cache before. Thus, as we start from an empty cache, after $m$ read operations at most $0+1+\ldots+(k-1)+(m-k)k=mk-{k+1\choose 2}$ pairs of adjacent vertices have been in the cache together at some point. Consequently, $e(G)\leq mk-{k+1\choose 2}$, which is equivalent to (\ref{basic_lower_bound}).


The original $k$-radius sequence problem is a special case of our generalization, where $G=K_n$ (the complete graph on $n$ vertices). Blackburn \cite{Bl} gave a simple replacement strategy {which} shows that, for a fixed $k$, {the value of} $c_k(K_n)$ is asymptotically equal to the lower bound (\ref{basic_lower_bound}). Moreover, he proved using a non-constructive method that imposing the restriction to a first-in first-out strategy does not affect the asymptotic efficiency, i.e. the number $f_k(K_n)$ is asymptotically equal to the lower bound (\ref{basic_lower_bound}) too. Currently, the best known upper bound for $f_k(K_n)$ is $f_k(K_n)=\frac{n^2}{2k}+O(n \log n)$, which was proved by a constructive method in our recent paper~\cite{DLRz-cyrkulanty}.

Now, consider the case when $G$ is a complete bipartite graph $K_{m,n}$. In terms of the initial motivation it means that we want to compute the values of a two-argument function $H$ whose domain is a Cartesian product $X\times Y$, where $X$ and $Y$ are the sets that form the bipartition in $G$.

If $k$ is fixed and both $m$ and $n$ are large, then $c_k(K_{m,n})$ is asymptotically equal to the lower bound (\ref{basic_lower_bound}) -- more precisely, we have $c_k(K_{m,n})=\frac{mn}{k}+O(m+n)$. This bound is attained by the following replacement strategy: pick $k$ vertices from $X$ and keep them in the cache while cycling through all vertices from $Y$, and repeat the process a total of $\left\lceil\frac{\left|X\right|}{k}\right\rceil$ times, each time picking $k$ -- or possibly less than $k$ in the last iteration -- different vertices from $X$.

{It is perhaps interesting that unlike in the case of the complete graph $K_n$, the parameters $f_k(K_{m,n})$ and $c_k(K_{m,n})$ are not asymptotically equal (see Theorem \ref{theorem_KmnUpper}).}

\subsection{Our contributions}\label{sec:results}

The main result of this paper is that for every $k$ there is a constant $d_k$ such that $f_k(K_{m,n})$ is roughly equal to $d_k\frac{mn}{k}$, in case when $m$ and $n$ are sufficiently large -- that is, a shortest $k$-radius sequence for a complete bipartite graph is roughly $d_k$ times longer than the trivial lower bound (\ref{basic_lower_bound}) would imply. We have $1\leq d_k < 1+\frac{\sqrt{2}}{2} \approx 1.7071$ (and $d_k$ is close to $1+\frac{\sqrt{2}}{2}$ for large $k$). Here is a precise statement of this result (see the end of Section \ref{section_proofs} for the proof).

\begin{theorem}
\label{theorem_KmnUpper}
Let $k$ be a positive integer. For every $\epsilon>0$ if $m$ and $n$ are sufficiently large, then
$$d_k\frac{mn}{k} \leq f_k\left(K_{m,n}\right)\leq\left(1+\epsilon\right)d_k\frac{mn}{k},$$
where $\frac{k}{2k-\sqrt{2k(k-1)}}\leq d_k\leq \frac{k+1}{2k-\sqrt{2k(k-1)}}.$
\end{theorem}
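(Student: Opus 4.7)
The plan is to prove the matching lower and upper bounds of the theorem separately.

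For the lower bound, I would fix any $k$-radius sequence $\sigma = x_1 \ldots x_L$ for $K_{m,n}$ and classify each of its $L$ positions as type $X$ or type $Y$ according to which side of the bipartition the vertex $x_i$ belongs to. Let $P_X, P_Y$ be the resulting position sets. Call a pair $(i,j)$ with $i<j$, $j-i \le k$ a \emph{cross-pair} if the two positions receive opposite types. Since every edge of $K_{m,n}$ must be witnessed by at least one cross-pair, the number $C$ of cross-pairs satisfies $C \ge mn$. The total number of position pairs at distance at most $k$ equals $kL - \binom{k+1}{2}$, and decomposing into cross versus monochromatic gives the identity $C + E_{XX} + E_{YY} = kL - \binom{k+1}{2}$, where $E_{XX}$ and $E_{YY}$ count same-type pairs at distance at most $k$. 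Thus the problem reduces to proving $E_{XX}+E_{YY} \ge (\sqrt{2k(k-1)}-k) L - O(k^2)$; combining with $C \ge mn$ yields $L \ge mn/(2k-\sqrt{2k(k-1)})$, hence $d_k \ge k/(2k-\sqrt{2k(k-1)})$.

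This lower bound on monochromatic pairs is the main technical obstacle. My approach is to combine a convexity estimate on sliding-window counts with an optimization over periodic extremal patterns. For each $i \in [1, L-k]$, let $\alpha_i = |P_X \cap [i, i+k]|$; Cauchy-Schwarz gives $\sum_i \alpha_i^2 \ge (\sum_i \alpha_i)^2/(L-k)$, and since $\sum_i \binom{\alpha_i}{2}$ counts each $X$-$X$ pair at distance $d$ with multiplicity $k+1-d \le k$, an appropriate division produces a first lower bound on $E_{XX}$ (and symmetrically for $E_{YY}$). However, straight Cauchy-Schwarz on size-$(k+1)$ windows alone yields only $E_{XX}+E_{YY} \ge (k^2-1)L/(4k)$, which is asymptotically too weak. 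The sharper constant $\sqrt{2k(k-1)}$ emerges by identifying the extremal periodic color pattern $(X^p Y^p)^\infty$ with optimal block size $p^\star = \sqrt{k(k-1)/2}$ solving the first-order condition of the minimization; an averaging argument then shows that no non-periodic sequence does asymptotically better.

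For the upper bound, I would construct an explicit short $k$-radius sequence by concatenating blocks of the form $X^p Y^p$ with $p = \lceil \sqrt{k(k-1)/2} \rceil$, where each $X$-block lists $p$ distinct $X$-vertices and each $Y$-block lists $p$ distinct $Y$-vertices. In two consecutive block-pairs $X^p_\alpha Y^p_\alpha X^p_\beta Y^p_\beta$, most $(X,Y)$ pairs among the $4p$ positions lie at distance at most $k$ and so realize cross-pairs, provided the four vertex-subsets are chosen so that the edges covered are fresh. Arranging the choice of subsets along the sequence as an Eulerian-type walk on the bipartite design whose edges are pairs consisting of a $p$-subset of $X$ and a $p$-subset of $Y$ ensures each edge of $K_{m,n}$ is covered at least once, yielding total length $(1+\epsilon)(k+1)mn/(k(2k-\sqrt{2k(k-1)}))$ for large $m, n$. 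The boundary and parity adjustments at block transitions contribute only lower-order terms that are absorbed into the $1+\epsilon$ slack, matching $d_k \le (k+1)/(2k-\sqrt{2k(k-1)})$.
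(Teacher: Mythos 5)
Your framework for the lower bound (classify positions by bipartition side, bound the number of cross-pairs below by $mn$, and reduce to a lower bound on monochromatic pairs) is exactly the paper's setup with its characteristic sequence $\boldsymbol{b}(\boldsymbol{a})$ and $k$-bad pairs, and you correctly diagnose that plain Cauchy--Schwarz on windows is too weak. But the step you then wave at --- ``identifying the extremal periodic pattern $(X^pY^p)^\infty$\ldots an averaging argument then shows that no non-periodic sequence does asymptotically better'' --- is the entire technical content of the result, and as stated it is not a proof. Worse, it is not even the right target: the paper explicitly leaves open whether the periodic patterns $(0^t1^t)^*$ are extremal (this is its conjecture $a_k=z_k$, verified only for $k\le 5$), and if they \emph{were} extremal the resulting constant would be $z_k\approx\sqrt{2k(k+1)}-k-1$, not $\sqrt{2k(k-1)}-k$; the periodic pattern's normalized bad-pair count is minimized at block size $\sqrt{k(k+1)/2}$, whereas your $p^\star=\sqrt{k(k-1)/2}$ is the minimizer of the paper's \emph{lower-bound} function $u(t)$, obtained by a completely different route. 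The paper's actual argument (its Theorem on $a_k\ge\sqrt{2k(k-1)}-k$) takes a minimum-weight sequence, proves by an exchange argument that every position lies in at most $k$ bad pairs, establishes Lipschitz-type relations between consecutive bad-pair counts, and then lower-bounds the average ``score'' of each maximal monochromatic run; none of this is replaced by your averaging claim.

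The upper bound has two further problems. First, your construction yields length about $(1+\epsilon)\frac{k+1}{2k-\sqrt{2k(k-1)}}\cdot\frac{mn}{k}$ while your lower bound gives $\frac{k}{2k-\sqrt{2k(k-1)}}\cdot\frac{mn}{k}$; these constants differ by a factor $\frac{k+1}{k}$, so you have not produced a single $d_k$ satisfying $d_k\frac{mn}{k}\le f_k(K_{m,n})\le(1+\epsilon)d_k\frac{mn}{k}$, which is the actual content of the theorem. The paper gets matching constants by expressing \emph{both} bounds in terms of the same quantity $a_k$ (the minimum normalized weight of a cycle in the de Bruijn graph $B_k$) and setting $d_k=\frac{k}{k-a_k}$; the stated numerical interval for $d_k$ is only a consequence of its two-sided estimate of $a_k$. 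Second, the assertion that an ``Eulerian-type walk'' on pairs of $p$-subsets covers every edge of $K_{m,n}$ with only a $(1+\epsilon)$ loss is unsubstantiated: the set of edges witnessed by one block window is a specific small configuration, and covering $E(K_{m,n})$ nearly perfectly by translates of such configurations is precisely the near-perfect covering problem that the paper solves by invoking the Frankl--R\"odl theorem after verifying regularity and a codegree condition. Without that (or an equivalent nibble-type argument), the claimed $(1+\epsilon)$ overhead does not follow.
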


It is worth highlighting that one part of Theorem \ref{theorem_KmnUpper}, the lower bound, generalizes to all bipartite graphs. The following result is a reformulation of Corollary \ref{cor1}.

\begin{theorem}
\label{theorem_bipartiteLower}
Let $k$ be a positive integer. For every bipartite graph $G$ we have
$$
f_k(G)\geq d_k\frac{e(G)}{k},
$$
where $d_k$ is the constant from Theorem \ref{theorem_KmnUpper}.
\end{theorem}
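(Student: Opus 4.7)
The plan is to deduce Theorem \ref{theorem_bipartiteLower} from a purely combinatorial inequality about the ``type sequence'' associated to any $k$-radius sequence (this is Corollary \ref{cor1} in the paper's setup). Given a $k$-radius sequence $x_1,\ldots,x_m$ for a bipartite graph $G=(A\cup B,E)$, set $\sigma_t=A$ if $x_t\in A$ and $\sigma_t=B$ otherwise. Since every edge of $G$ has its endpoints on opposite sides, each edge $uv\in E(G)$ is witnessed by positions $s,t$ with $\{x_s,x_t\}=\{u,v\}$, $\sigma_s\neq\sigma_t$, and $|s-t|\leq k$; picking one such witness per edge produces distinct unordered position pairs, so $e(G)\leq P(\sigma):=|\{\{s,t\}:\ 1\leq s<t\leq m,\ t-s\leq k,\ \sigma_s\neq\sigma_t\}|$. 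The theorem then reduces to the \emph{type-sequence inequality}
\[
P(\sigma)\ \leq\ m\bigl(2k-\sqrt{2k(k-1)}\bigr)
\]
for every binary sequence $\sigma$ of length $m$, since combining it with the previous displayed inequality gives $m\geq e(G)/(2k-\sqrt{2k(k-1)})=(k/(2k-\sqrt{2k(k-1)}))\cdot e(G)/k\geq d_k\cdot e(G)/k$.

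For the type-sequence inequality, my plan is to analyse the sliding windows $W_t=[t,t+k]$ of size $k+1$. Writing $a_t,b_t$ for the numbers of $A$- and $B$-positions in $W_t$ (so $a_t+b_t=k+1$), and $c(s)$ for the number of opposite-type positions in $[s+1,s+k]$, one has $P(\sigma)=\sum_s c(s)$ and the pointwise identity $a_sb_s=c(s)(k+1-c(s))$. Double-counting pairs by the windows containing them and using $a_sb_s\leq(k+1)^2/4$ gives a weighted inequality on the distance-$d$ opposite-type counts, which in combination with the pointwise constraint $c(s)\leq k$ (each window has only $k$ slots besides $x_s$) should lead to a quadratic inequality in $\alpha=P(\sigma)/m$ whose relevant root is $2k-\sqrt{2k(k-1)}$; equivalently, the smaller root of $\alpha^2-4k\alpha+2k(k+1)=0$.

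The main obstacle will be producing precisely this quadratic. Straightforward combinations of the window bound, the pointwise $c(s)\leq k$, and the Cauchy--Schwarz inequality $\sum_s c(s)^2\geq P(\sigma)^2/m$ are each individually too weak (for example, the window bound together with Cauchy--Schwarz alone collapses to $(2P(\sigma)-m(k+1))^2\geq 0$), so the argument must combine them in a balanced way, likely also invoking the coupling between consecutive windows ($a_{t+1}-a_t\in\{-1,0,1\}$). The extremal configurations are the blocky sequences $A^aB^aA^aB^a\cdots$ with $a\approx\sqrt{k(k+1)/2}$, which saturate the bound up to lower-order terms and should guide the extremal analysis.
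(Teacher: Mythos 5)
Your reduction of the theorem to a statement about the binary ``type sequence'' of a $k$-radius sequence is sound and matches the paper's own first step (the paper works with cyclic sequences, but that is cosmetic). However, there are two genuine problems. First, your final chain of inequalities is reversed. The constant $d_k$ of Theorem \ref{theorem_KmnUpper} is, by the proof of that theorem, $d_k=\frac{k}{k-a_k}$, where $a_k$ is the minimum normalized weight of a cycle in the de Bruijn graph $B_k$; the paper only proves $d_k\geq\frac{k}{2k-\sqrt{2k(k-1)}}$, and equality fails already for $k=2$, where $a_2=\frac{1}{2}$ gives $d_2=\frac{4}{3}$ while $\frac{k}{2k-\sqrt{2k(k-1)}}=1$. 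So even if your type-sequence inequality $P(\sigma)\leq m\bigl(2k-\sqrt{2k(k-1)}\bigr)$ were established, you would obtain only $f_k(G)\geq\frac{k}{2k-\sqrt{2k(k-1)}}\cdot\frac{e(G)}{k}$, which is strictly weaker than the claimed $f_k(G)\geq d_k\frac{e(G)}{k}$; your step ``$\geq d_k\cdot e(G)/k$'' uses the known bound on $d_k$ in the wrong direction. The paper avoids this by bounding the number of bad pairs below by $a_k$ times the length directly (Proposition \ref{norm}: the closed walk in $B_k$ corresponding to the sequence decomposes into cycles, each of normalized weight at least $a_k$), which yields $f_k(G)\geq\frac{e(G)}{k-a_k}=d_k\frac{e(G)}{k}$ exactly, with no need to evaluate $a_k$.

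Second, the heart of your argument --- the inequality $P(\sigma)\leq m\bigl(2k-\sqrt{2k(k-1)}\bigr)$ --- is not actually proved: you state yourself that the window bound, the pointwise bound $c(s)\leq k$, and Cauchy--Schwarz are each too weak and that the required ``balanced'' combination is still to be found. This inequality is essentially the paper's Theorem \ref{ograniczenia na a_k} ($a_k\geq\sqrt{2k(k-1)}-k$), whose proof is a genuinely nontrivial run-decomposition argument: split the sequence into maximal constant blocks, bound the per-position bad-pair count $\overline{\ell}_{h+j}$ from below by $\max(t-1,\,k-2-j,\,k-t-1+j)$ via local switching and monotonicity claims (Claims \ref{obs1}--\ref{obs5}), and minimize the resulting per-block score over the block length $t$. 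Your identification of the extremal blocky sequences with block length about $\sqrt{k(k+1)/2}$ is correct and consistent with the paper, but without an argument replacing those claims the key step remains a conjecture --- and, as explained above, it is in any case the wrong key step for the theorem as stated.
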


\subsection{Related problems}

An additional motivation of our study comes from its relationship to maximum cuts in some graphs. A {\it maximum cut} in a graph $G$ is a bipartition of the set of vertices of $G$ maximizing the {\it size} of the cut, i.e. the number of edges that join vertices of the two sets of the bipartition; the size of the maximum cut in $G$ is denoted $\mc(G)$. Finding a maximum cut in a graph is a widely studied problem which is important in both graph theory and combinatorial optimization (see Newman \cite{N} and a survey by Poljak and Tuza \cite{PT}).

Let $C_n^k$ denote a circulant graph obtained from the cycle $C_n$ on $n$ vertices by joining with edges all vertices at distance at most $k$. Our considerations yield an estimation on the size of a maximum cut in $C_n^k$ (see the end of Section \ref{section_proofs} for a proof).

\begin{col}
\label{corollary_mcincirculant}
For a fixed $k$, we have
$$
\mc(C_n^k)=\frac{kn}{d_k}\left(1-o(1)\right),
$$
where $d_k$ is the constant from Theorem \ref{theorem_KmnUpper}.
\end{col}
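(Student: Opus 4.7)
The plan is to exploit a natural correspondence: a bipartition $(A,B)$ of $[n]$ can be viewed simultaneously as a cut of $C_n^k$ and as a 2-coloring of positions in a length-$n$ sequence (indicating which side of a bipartite graph each label belongs to). Under this dictionary, a $k$-radius sequence for a complete bipartite graph produces a large cut of $C_n^k$, and a large cut of $C_n^k$ produces a short $k$-radius sequence for a bipartite graph with many edges. The two directions are then controlled by the two matching estimates in Theorems \ref{theorem_KmnUpper} and \ref{theorem_bipartiteLower}.

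For the lower bound, fix $\epsilon>0$ and pick positive integers $a, b$ with $ab\le kn/((1+\epsilon)d_k)$ and $ab=(1-o(1))kn/((1+\epsilon)d_k)$. By the upper bound in Theorem \ref{theorem_KmnUpper}, for $n$ large enough there is a $k$-radius sequence $s_1\ldots s_m$ for $K_{a,b}$ with $m\le n$. Coloring position $i$ by the part ($X$ or $Y$) of $K_{a,b}$ containing $s_i$ yields a bipartition of $[m]$; every one of the $ab$ edges of $K_{a,b}$ must be realized by some pair of positions $(i,j)$ with $|i-j|\le k$ and labels in different parts, and distinct edges of $K_{a,b}$ correspond to distinct label pairs and hence to distinct position pairs. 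Consequently, the induced bipartition, extended arbitrarily to a bipartition of $[n]$, produces at least $ab$ cut edges in $C_n^k$, so $\mc(C_n^k)\ge ab\ge (1-o(1))kn/d_k$.

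For the upper bound, let $(A,B)$ be a maximum cut of $C_n^k$ of size $s$, and let $H$ be the bipartite graph on vertex set $[n]$ with parts $A, B$ whose edges are exactly those cut edges with linear endpoint distance at most $k$. The sequence $1,2,\ldots,n$ (using distinct labels) is trivially a $k$-radius sequence for $H$, so $f_k(H)\le n$, and Theorem \ref{theorem_bipartiteLower} yields $n\ge d_k e(H)/k$. Only the $k(k+1)/2$ wrap-around edges of $C_n^k$ (pairs with cyclic distance at most $k$ but linear distance larger) are lost in passing from the cut to $H$, so $e(H)\ge s-k(k+1)/2$, giving $s\le kn/d_k+k(k+1)/2=(1+o(1))kn/d_k$ for $k$ fixed and $n\to\infty$.

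The main obstacle is recognizing the dictionary above; once it is in place, the upper bound of Theorem \ref{theorem_KmnUpper} and the bipartite lower bound of Theorem \ref{theorem_bipartiteLower} fit together with matching constants. The only mild technicality is the $O(k^2)$ discrepancy between cyclic and linear windows, which disappears into the $o(1)$ error term when $k$ is held fixed.
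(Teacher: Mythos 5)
Your argument is correct, but it takes a genuinely different route from the paper. The paper's proof is a direct, exact correspondence: a bipartition of $V(C_n^k)$ \emph{is} a cyclic binary sequence of length $n$, the cut edges are exactly the $k$-good pairs, hence $\mc(C_n^k)=kn-w_k(n)$ precisely, and the asymptotics follow from $\lim_{s\to\infty} w_k(s)/s=a_k$ (equation (\ref{limit})) together with $d_k=\frac{k}{k-a_k}$. You instead treat Theorems \ref{theorem_KmnUpper} and \ref{theorem_bipartiteLower} as black boxes and run the dictionary in both directions. Your upper bound on $\mc(C_n^k)$ is essentially the paper's Corollary \ref{cor1} argument re-derived (a cut yields a bipartite graph for which the identity sequence $1,\dots,n$ is a $k$-radius sequence), and the bookkeeping of the $k(k+1)/2$ wrap-around pairs is exactly the cyclic-versus-linear discrepancy the paper absorbs into $f_k(G)-k\le f_k^{cyc}(G)$. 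Your lower bound, however, invokes the full strength of the upper bound of Theorem \ref{theorem_KmnUpper} --- and therefore, implicitly, the Frankl--R\"odl covering theorem --- to produce a short $k$-radius sequence for $K_{a,b}$ whose characteristic sequence yields a large cut. This is a much heavier hammer than the paper needs: there, the large cut comes directly from the periodic binary sequence obtained by repeating a minimum-normalized-weight cycle of the de Bruijn graph $B_k$ (Proposition \ref{norm}), with no probabilistic covering argument. What your version buys is a clean demonstration that the max-cut estimate is formally equivalent, up to the $o(1)$ terms, to the pair of matching bounds on $f_k$ for bipartite graphs; what it loses is the exact identity $\mc(C_n^k)=kn-w_k(n)$ and the elementary, constructive character of the extremal cut. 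All the individual steps (injectivity of the edge-to-position-pair assignment, the choice of $a,b\to\infty$ with $ab$ near the target so that Theorem \ref{theorem_KmnUpper} applies, and the diagonalization over $\varepsilon$) check out.
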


The implications go both ways -- given the size of a maximum cut in a graph $G$ we can derive a lower bound on $f_k(G)$. Note that a $k$-radius sequence for $G$ must be also a $k$-radius sequence for every subgraph of $G$ (in particular, the bipartite subgraph induced by the maximum cut). With Theorem \ref{theorem_bipartiteLower}, it implies that:

\begin{col}
For every graph $G$, we have
$$f_k(G)\geq d_k \frac{\mc(G)}{k},$$
where $d_k$ is the constant from Theorem \ref{theorem_KmnUpper}.
\end{col}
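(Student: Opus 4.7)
The plan is to reduce directly to the bipartite case handled by Theorem \ref{theorem_bipartiteLower}. First I would fix a maximum cut of $G$, i.e.\ a bipartition $V(G)=A\cup B$ for which the set $E_{\mathrm{cut}}$ of edges with one endpoint in $A$ and the other in $B$ has cardinality $\mc(G)$. Let $H$ be the spanning subgraph of $G$ with edge set $E_{\mathrm{cut}}$; by construction $H$ is bipartite and satisfies $e(H)=\mc(G)$.

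Next I would invoke the monotonicity observation already made in the paragraph preceding the corollary: the $k$-radius property only constrains pairs of adjacent vertices, so any $k$-radius sequence for $G$ is automatically a $k$-radius sequence for every subgraph of $G$. Applied to $H$, this gives $f_k(G)\geq f_k(H)$. Then Theorem \ref{theorem_bipartiteLower} applied to the bipartite graph $H$ yields
$$
f_k(H)\geq d_k\frac{e(H)}{k}=d_k\frac{\mc(G)}{k},
$$
and combining the two inequalities finishes the proof.

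There is no real obstacle: the corollary is essentially a two-line consequence of Theorem \ref{theorem_bipartiteLower} together with the definition of $\mc(G)$. The only point worth a moment's care is the standing assumption (made at the start of Section~1.2) that the graph to which the bounds are applied has more than $k+1$ non-isolated vertices; this is harmless, since either $H$ satisfies the assumption and the argument above applies verbatim, or $\mc(G)$ is so small that the claimed inequality is trivial.
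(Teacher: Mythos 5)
Your proof is correct and is essentially the paper's own argument: the paper likewise passes to the bipartite subgraph induced by a maximum cut, uses the observation that a $k$-radius sequence for $G$ is a $k$-radius sequence for every subgraph, and applies Theorem \ref{theorem_bipartiteLower}. Your extra remark about the standing assumption on the number of non-isolated vertices is a reasonable bit of added care, but does not change the substance.
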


The problem of finding a shortest $k$-radius sequence for a graph is also related to the {\it bandwidth problem}.
The {\it bandwidth} of a graph $G=(V,E)$ is the minimum of the values $\max \{|i-j| \colon v_iv_j \in E\}$ over all orderings $(v_1,v_2,\ldots,v_n)$ of $V$. Let us call such an ordering {\it bw-optimal}.
Informally speaking, we want to place the vertices of $G$ in integer points of a line in such a way that the longest edge is as short as possible (see for example Chinn {\it et al.} \cite{CCDG}).

Consider a graph $G$ with bandwidth $k$ and the bw-optimal ordering of its vertices. It is easy to observe that it is a $k$-radius sequence for $G$. Thus the graph (with no isolated vertices) has a $k$-radius sequence containing each vertex exactly once if and only if its bandwidth is at most $k$.
Since determining the bandwidth is NP-hard, even for subcubic graphs (see Garey {\it et al.} \cite{GGJK}), the problem of determining the existence of a $k$-radius sequence of length $n$ is NP-hard as well (if $k$ is a part of the instance).

In Section \ref{complexity} we give stronger complexity results. We show the problem of determining $f_k(G)$ for an arbitrary graph $G$ is NP-hard even if $k$ is a constant greater than $1$. Moreover, determining $c_k(G)$ for an arbitrary graph $G$ is NP-hard for every constant $k\geq 1$.


\section{Asymptotically shortest $k$-radius sequences for complete bipartite graphs}
\label{section_proofs}


For technical reasons it will be convenient to {consider} in this section  binary sequences {that} are {\it cyclic}. {The} terms of {such sequences} $b_1b_2\ldots b_s$ are arranged in a ``cyclic way'', i.e. $b_1$ is a successor of $b_s$. Consequently, we redefine the notion of the distance for cyclic sequences to ${\dist_c}(b_i,b_j)=\min(|i-j|,s-|i-j|)$. {The ``cyclic'' version of our problem is much more symmetric and greatly simplifies many of the following arguments.} {By a  \emph{cyclic $k$-radius sequence} for a graph $G$ we mean a sequence of vertices of $G$ such that for every edge $uv$ of $G$ vertices $u$ and $v$ appear at least once within cyclic distance $k$ in the sequence. We define $f_k^{cyc}(G)$ to be the length of a shortest cyclic $k$-radius sequence for a graph $G$. Obviously, every $k$-radius sequence is a cyclic $k$-radius sequence and if $x_1,x_2,\ldots,x_s$ is a cyclic $k$-radius sequence, then $x_1,x_2,\ldots,x_s,x_1,x_2,\ldots,x_k$ is a $k$-radius sequence. Thus, $f_k(G)-k\leq f_k^{cyc}(G)\leq f_k(G)$.}

When we construct a cyclic $k$-radius sequence for a bipartite graph $G$, we have to jump from one bipartition class of vertices to the other many times. Let $X$ and $Y$ be the bipartition classes in $G$, $|X|=m$ and $|Y|=n$ and let $\boldsymbol{ a}=a_1,a_2,\ldots,a_s$ be a cyclic $k$-radius sequence for the graph $G$. We define the binary sequence $\boldsymbol{b}(\boldsymbol{a})=b_1b_2\ldots b_s$ {(called a {\it characteristic sequence} of $\boldsymbol{a}$)} such that $b_i=0$ whenever $a_i\in X$ and $b_i=1$ whenever $a_i\in Y$. Every appearance of two identical symbols at cyclic distance at most $k$ in $\boldsymbol{b}(\boldsymbol{a})$ corresponds to a pair of vertices of $G$ which are at the same distance in $\boldsymbol{a}$ but do not form an edge in $G$. Therefore, we call the pair of indices of such a pair of terms in $\boldsymbol{b}(\boldsymbol{a})$ a {\it bad pair}.

Formally, an unordered pair $ij$, $i\not=j$, is a {\it $k$-bad pair} (resp. a {\it $k$-good pair}) in a cyclic binary sequence $\boldsymbol{b}$, if ${\dist_c}(b_i,b_j)=\min(|i-j|,s-|i-j|)\leq k$ and $b_i=b_j$ (resp. $b_i\not=b_j$). For every $k$ and $s$ we will be interested in constructing a cyclic binary sequence $\boldsymbol{b}$ of length $s$ with the least possible number of $k$-bad pairs. Let $w_k(s)$ be this  number.

The number of all pairs of terms at cyclic distance at most $k$ in a cyclic binary sequence of length $s$ is equal to $ks$. Let $M$ be the length of a shortest { cyclic} $k$-radius sequence for a bipartite graph $G$. Then, we get the inequality
\[
kM \geq e(G)  + w_k(M).
\]
So, if we prove that $w_k(s)\geq \alpha s$, for some $\alpha<k$, then we will get
\begin{equation}\label{inequ1}
{f_k(G) \geq} f_k^{{cyc}}(G) \geq \frac{e(G)}{k-\alpha}.
\end{equation}

Clearly, $w_1(s)=0$ if $s$ is even because the cyclic sequence $0101\ldots 01$ has no $1$-bad pairs. For a similar reason $w_1(s)=1$ when $s$ is odd.

Let $B_k$ be de Bruijn graph, i.e. a directed graph, whose vertices are all $k$-term binary sequences and an ordered pair of vertices $(v,u)$ is an edge if the $(k-1)$-term suffix of $v$ is the $(k-1)$-term prefix of $u$. We identify each edge with the $(k+1)$-term binary sequence which starts with the first term of $v$ and is followed by all the terms of $u$.

Clearly, every cyclic binary sequence of length $s$ corresponds to a directed closed walk of length $s$ in $B_k$ (both vertices and edges can appear in a walk an arbitrary number of times). We assign to every edge $e$ in $B_k$ the weight $t_k(e)$ which is equal to the number of appearances of the first term of $e$ on the remaining $k$ positions of $e$. For instance, if $e=010001$ (here $k=5$), then $t_5(e)=3$. The weight $t_k(C)$ of a closed walk $C$ in $B_k$ is just the sum of weights of its edges (we count each edge as many times as it appears in the walk).
\begin{prop}\label{prop1}
The number of $k$-bad pairs in a cyclic binary sequence is equal to the weight of the corresponding closed walk in de Bruijn graph $B_k$.
\end{prop}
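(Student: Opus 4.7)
The plan is to prove the identity via a direct double counting, showing that both the walk weight and the number of $k$-bad pairs can be expressed as the number of ordered pairs of positions $(p,q)$ in $\boldsymbol{b}$ with $b_p=b_q$ and $(q-p)\bmod s\in\{1,\ldots,k\}$. To set this up, I would first fix the correspondence between the cyclic sequence $\boldsymbol{b}=b_1b_2\ldots b_s$ and its walk $C$ in $B_k$: position $i$ yields the vertex $v_i=b_ib_{i+1}\ldots b_{i+k-1}$ and the edge $e_i=b_ib_{i+1}\ldots b_{i+k}$, with all indices reduced modulo $s$. Consecutive vertices $v_i$ and $v_{i+1}$ share the required $(k-1)$-term overlap, so $C$ is indeed a closed walk of length $s$ in $B_k$ that traverses the edges $e_1,\ldots,e_s$ in order.

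The next step is to unfold the weight. By definition, $t_k(e_i)$ equals the number of $j\in\{1,\ldots,k\}$ for which $b_{i+j}=b_i$, so
$$t_k(C)=\sum_{i=1}^s t_k(e_i)=\sum_{i=1}^s\sum_{j=1}^k [b_{i+j}=b_i],$$
and this is precisely the number of ordered pairs $(p,q)$ described above (by taking $p=i$ and $q=i+j$).

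It then remains to match these ordered pairs with the unordered $k$-bad pairs. For any distinct indices $p,q$, the two offsets $(q-p)\bmod s$ and $(p-q)\bmod s$ are positive integers summing to $s$, and $\dist_c(b_p,b_q)$ is the smaller of them. Under the standing (and harmless) assumption $s>2k$, which holds in every regime where the proposition is applied, at most one of the two offsets can lie in $\{1,\ldots,k\}$, and exactly one does if and only if $\dist_c(b_p,b_q)\leq k$. Consequently, every $k$-bad pair $\{p,q\}$ is counted exactly once in the sum for $t_k(C)$, and conversely every ordered pair counted there comes from a unique $k$-bad pair, which proves the claim. The only potential pitfall is precisely this cyclic bookkeeping, namely ensuring that neither direction of an unordered pair is lost or double counted; once the reformulation in terms of ordered pairs is in place the argument is purely mechanical, and the proposition should be viewed as a change of viewpoint rather than a substantive combinatorial fact.
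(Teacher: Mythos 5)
Your proof is correct and follows essentially the same route as the paper, which simply observes that every $k$-bad pair contributes to the weight of exactly one edge of the walk, namely the edge starting at the element of the pair that occurs first; your ordered-pair double count is just a more explicit rendering of that bijection. Your remark that one needs $s>2k$ (so that at most one of the two cyclic offsets of a pair lies in $\{1,\ldots,k\}$) is a genuine subtlety the paper leaves implicit, and you handle it correctly.
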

\begin{proof}
To see this, it suffices to observe that every $k$-bad pair contributes to the weight of exactly one edge of the corresponding closed walk -- the edge starting with the element of the pair, which appears first in the sequence.
\end{proof}
The {\it normalized weight} of a closed walk $C$ in $B_k$ is the ratio $\frac{t_k(C)}{|C|}$ ($|C|$ is the number of edges in $C$ - again we count each edge as many times as it appears in $C$).

Let $a_k$ be the least possible normalized weight of a cycle in $B_k$, i.e.
\[
a_k=\min\left\{\frac{t_k(C)}{|C|}:\ C\ {\rm is\ a\ cycle\ in\ } B_k\right\}
\]
(we allow no multiple appearances of vertices and edges in cycles).
\begin{prop}\label{norm}
For all positive integers $k$ and $s$,
\[
a_ks\leq w_k(s) < a_ks+k(2^k+k).
\]
\end{prop}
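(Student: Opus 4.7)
I would prove the two inequalities separately, using Proposition \ref{prop1} to translate freely between cyclic binary sequences and closed walks in $B_k$.

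\emph{Lower bound.} The idea is to exploit the fact that any closed walk in a directed graph (where by construction the edge-multiset has equal in- and out-degree at every vertex) decomposes as an edge-disjoint union of simple directed cycles. Given a cyclic binary sequence of length $s$ that realises $w_k(s)$, let $C$ be the corresponding closed walk in $B_k$; decompose its edge-multiset into simple cycles $C_1,\ldots,C_r$ with $\sum_i |C_i| = s$. Since $t_k$ is edge-additive and each $t_k(C_i) \geq a_k|C_i|$ by the very definition of $a_k$, one concludes $t_k(C) \geq a_k s$, which together with Proposition \ref{prop1} yields $w_k(s) \geq a_k s$.

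\emph{Upper bound.} Here the plan is to construct an explicit closed walk of length $s$ in $B_k$ and bound its weight, then invoke Proposition \ref{prop1}. Let $C^*$ be a cycle in $B_k$ with $t_k(C^*)/|C^*| = a_k$, set $\ell^* := |C^*|$ (so $\ell^* \leq 2^k$, since $C^*$ is simple in a graph on $2^k$ vertices), and fix any vertex $v \in C^*$. I would use two structural facts: $B_k$ is strongly connected with diameter at most $k$, so there are shortest directed paths $P_1$ from $v$ to $0^k$ and $P_2$ from $0^k$ back to $v$, of lengths $p_1,p_2 \leq k$; and $0^k$ carries a self-loop of weight~$k$. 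The walk I propose traverses $P_1$, then performs $N$ self-loops at $0^k$, then traverses $P_2$, and finally runs $q$ laps around $C^*$, where $q := \lfloor (s - p_1 - p_2)/\ell^* \rfloor$ and $N := s - p_1 - p_2 - q\ell^*$; by construction the total length is $s$ and $0 \leq N < \ell^* \leq 2^k$. (The finitely many small $s$ not covered by this choice are handled trivially using $w_k(s) \leq ks$.)

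\emph{Weight estimate and expected difficulty.} The $q$ laps of $C^*$ contribute exactly $q a_k \ell^*$ to the weight; each of the remaining $N + p_1 + p_2$ edges contributes at most $k$. Rearranging gives total weight at most $a_k s + (k - a_k)(N + p_1 + p_2)$. The main obstacle will be pushing the naive bound $N + p_1 + p_2 \leq 2^k + 2k$, which only yields $a_k s + k(2^k + 2k)$, down to the stated $a_k s + k(2^k + k)$: the slack lies in bounding the joint weight of the two shift-paths $P_1,P_2$ by $k(p_1+p_2) \leq 2k^2$ rather than $k^2$. I expect to recover the missing $k^2$ either by choosing the anchor $v \in C^*$ whose leading or trailing run of zeros is long enough that $p_1 + p_2 \leq k$, or by directly counting edge weights along the canonical shift-paths in $B_k$; both are routine refinements of an otherwise transparent construction.
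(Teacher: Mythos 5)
Your lower bound is exactly the paper's argument (decompose the edge multiset of the minimizing closed walk into cycles and use the definition of $a_k$ on each) and is correct. The issue is the upper bound. Your detour construction does produce a closed walk of length $s$, hence via Proposition \ref{prop1} a cyclic sequence with at most $a_ks+(k-a_k)(N+p_1+p_2)$ bad pairs, and this weaker estimate $w_k(s)\leq a_ks+O(k2^k)$ would in fact suffice for everything the proposition is later used for (only $w_k(s)\geq a_ks$ and the limit (\ref{limit}) are needed downstream). But it does not prove the inequality as stated, and neither of your proposed repairs closes the gap. The first one --- choosing the anchor $v\in C^*$ with $p_1+p_2\leq k$ --- is impossible unless $0^k$ is itself a vertex of $C^*$: the distance from $v$ to $0^k$ in $B_k$ is $k$ minus the number of trailing zeros of $v$, the distance back is $k$ minus the number of leading zeros, and for every $v\neq 0^k$ these two runs together occupy at most $k-1$ positions, so $p_1+p_2\geq k+1$. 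There is no guarantee that an optimal cycle passes through $0^k$; for $k=4$ the optimal cycle listed in Table \ref{tabelka} is $(000111)^*$, whose vertex set does not contain $0000$. One can try to absorb the overshoot using a positive lower bound on $a_k$, but no such bound is available at this point of the paper (Theorem \ref{ograniczenia na a_k} comes later and relies on (\ref{limit})), and your second repair ("directly counting edge weights along the shift-paths") is too vague to evaluate --- the shift edges leaving a vertex with many $1$'s really can have weight close to $k$.

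The paper sidesteps all of this by doing the padding at the level of sequences rather than walks: take the cyclic sequence of length $q\ell$ corresponding to $q$ laps of the optimal cycle (weight $q\ell a_k\leq a_ks$) and insert a block of $r=s-q\ell<2^k$ consecutive $0$'s at a single point. Inserting a block can only increase cyclic distances between old terms, so no new bad pairs arise among them; and every distance-at-most-$k$ pair that touches the block can be written as $\{i,i+d\}$ with $1\leq d\leq k$ and $i$ lying in the block or in the $k$ positions just before it, which gives at most $(k+r)k<k(2^k+k)$ additional bad pairs. In effect the paper's "insertion" costs only $(k+r)k$ because the inserted symbols never have to travel to $0^k$ and back, whereas your detour pays for up to $2k$ extra transit edges. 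If you keep your construction, you should either prove the proposition with a correspondingly larger (still $s$-independent) constant and check that this suffices for (\ref{limit}) and Theorem \ref{upper}, or switch to the insertion argument to get the stated constant.
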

\begin{proof} By Proposition \ref{prop1}, $w_k(s)$ is equal to the least possible weight of a closed walk, say $C$, of length $s$ in de Bruijn graph $B_k$. Clearly, the multiset of edges of the closed walk $C$ can be split into sets of edges of cycles, say $C_1,C_2,\ldots,C_p$, in $B_k$.

By the definition of $a_k$, we have $t_k(C_i)\geq a_k|C_i|$, for $i=1,\ldots,p$. Hence,
\[
w_k(s)=t_k(C)=t_k(C_1)+\ldots+t_k(C_p) \geq a_k(|C_1|+\ldots+|C_p|) = a_k|C|=a_ks.
\]

To complete the proof we need to construct a {cyclic} binary sequence of length $s$ with less than $a_ks+k(2^k+k)$ bad pairs. Let $\ell$ be the length of a cycle $C$ in $B_k$ with the normalized weight equal to $a_k$. Moreover, let $q=\lfloor\frac{s}{\ell}\rfloor$ and $r=s-q\ell\leq \ell-1<|V(B_k)| = 2^k$. We define $C'$ to be the closed walk in $B_k$ obtained by traversing the cycle $C$ $q$ times. Clearly, $t_k(C')=qt_k(C)=q\ell a_k \leq sa_k$.

We insert anywhere in the cyclic sequence corresponding to the closed walk $C'$ a sequence of $r$ consecutive $0$'s. The number of bad pairs in the resulting {cyclic} binary sequence is not larger than $t_k(C')+(k+r)k < a_ks+k(2^k+k)$.
\end{proof}
It follows from the proof of Proposition \ref{norm} that if $s$ is divisible by the length $\ell$ of a cycle in $B_k$ of minimum normalized weight (equal to $a_k$), then $w_k(s)=a_ks$ and there is a cyclic binary sequence with exactly $a_ks$ bad pairs which is periodic with the period equal to $\ell$.

Moreover, by Proposition \ref{norm}, {we have}
\begin{equation}\label{limit}
\lim_{s\rightarrow\infty}\frac{w_k(s)}{s}=a_k.
\end{equation}

Clearly, the cyclic sequence $0101\ldots 01$ of length $2s$ proves that $w_k(2s)\leq ks$, so $a_k\leq\frac{k}{2}<k$. Thus, Proposition \ref{norm} and the inequality (\ref{inequ1}) give now the following statement.
\begin{col}\label{cor1}
Let $k$ be a positive integer. For every bipartite graph $G$,
\[
{f_k(G) \geq} f_k^{{cyc}}(G) \geq \frac{e(G)}{k-a_k}.
\]
\end{col}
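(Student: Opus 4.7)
The plan is to combine Proposition \ref{norm} with the implication (\ref{inequ1}), both of which have already been assembled in this section. The first inequality $f_k(G)\ge f_k^{cyc}(G)$ is immediate from the opening discussion of the section: every (non-cyclic) $k$-radius sequence for $G$ automatically has the cyclic $k$-radius property, since the cyclic distance between two positions never exceeds their ordinary distance.

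For the second inequality, I would simply invoke (\ref{inequ1}) with $\alpha = a_k$. Proposition \ref{norm} supplies the needed bound $w_k(s)\ge a_k s$ for every $s$, and the paragraph immediately preceding the corollary supplies $a_k\le k/2 < k$ by exhibiting the alternating cyclic sequence $0101\ldots 01$ (whose bad pairs sit only at even cyclic distances). These two facts are exactly the hypothesis of (\ref{inequ1}), so the implication delivers $f_k^{cyc}(G)\ge e(G)/(k-a_k)$. If I preferred to avoid citing (\ref{inequ1}) and argue directly, I would take a shortest cyclic $k$-radius sequence $\mathbf{a}$ of length $M$ for $G$, pass to its characteristic binary sequence $\mathbf{b}(\mathbf{a})$, and count in two ways the $kM$ pairs of positions at cyclic distance at most $k$: each such pair is either bad (identical symbols, corresponding to a non-edge) or good (distinct symbols); since each edge of $G$ must be witnessed by at least one good pair, this yields $kM \ge e(G) + w_k(M)\ge e(G) + a_k M$, and rearranging gives the claim.

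There is no real obstacle at this stage: the corollary is essentially a bookkeeping step that packages Proposition \ref{norm} together with (\ref{inequ1}) into a single estimate valid for every bipartite graph. The only point requiring care is the verification $a_k<k$, without which the denominator $k-a_k$ could be nonpositive; this is exactly what the alternating-sequence observation guarantees. The genuine work of the paper begins only afterwards, when one needs to pin down the numerical value of $a_k$ sharply enough to extract the constant $d_k$ featured in Theorems \ref{theorem_KmnUpper} and \ref{theorem_bipartiteLower}.
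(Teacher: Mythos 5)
Your proposal is correct and matches the paper's own argument: the paper likewise derives the corollary by combining the lower bound $w_k(s)\geq a_k s$ from Proposition \ref{norm} with inequality (\ref{inequ1}), using the alternating sequence $0101\ldots01$ to certify $a_k\leq k/2<k$. Your direct two-way counting variant is just the derivation of (\ref{inequ1}) unfolded, so there is no substantive difference.
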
\hfill$\Box$

We shall show now that this lower bound is asymptotically tight for $K_{m,n}$.

\begin{theorem}\label{upper}
For every integer $k$ and real $\varepsilon>0$, if $m$ and $n$ are sufficiently large, then
\[
f_k(K_{m,n}) \leq \frac{mn}{k-a_k}(1+\varepsilon).
\]
\end{theorem}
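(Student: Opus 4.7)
The plan is to build a cyclic $k$-radius sequence for $K_{m,n}$ of length at most $(1+\varepsilon/2)\frac{mn}{k-a_k}$ and then append $k$ symbols to convert it into a (linear) $k$-radius sequence, which only perturbs the length by a $1+o(1)$ factor since $mn\to\infty$.

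First I would fix a cycle $C$ in $B_k$ of length $\ell$ with normalized weight equal to $a_k$, so that $t_k(C)=a_k\ell$. Let $p$ and $q$ be the number of $0$s and $1$s in one period of the binary string encoded by $C$. Choose a large integer $L$ so that $M=L\ell$ is close to $(1+\varepsilon/2)\frac{mn}{k-a_k}$ and so that $Lp$ and $Lq$ are divisible by $m$ and $n$ respectively; the adjustment needed for divisibility is negligible when $m,n$ are large. The cyclic binary sequence $\mathbf{b}$ of length $M$ obtained by concatenating $L$ copies of $C$ then has exactly $a_kM$ bad pairs and therefore exactly $(k-a_k)M=(1+\varepsilon/2)\, mn$ good pairs, by the remark following Proposition \ref{norm}.

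Next I would realise $\mathbf{b}$ as the characteristic sequence of an actual sequence $\boldsymbol{a}$ over $X\cup Y$: each $0$-position is labelled with a vertex of $X$ and each $1$-position with a vertex of $Y$. I will insist on a balanced labelling, each $x\in X$ appearing exactly $Lp/m$ times and each $y\in Y$ exactly $Lq/n$ times. The task reduces to showing that one can choose the two labellings so that for every pair $(x,y)\in X\times Y$ some good pair of $\mathbf{b}$ carries the labels $x$ and $y$. Since the total number of good pairs exceeds $mn$ by a factor of $1+\varepsilon/2$, each edge is, on average, served by $1+\varepsilon/2$ good pairs, so the problem is one of finding a labelling that realises this average uniformly. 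I would attempt this with the probabilistic method: pick the two labellings independently and uniformly at random among balanced assignments, compute the expected number of uncovered edges, and then patch any leftover uncovered edges by appending a short auxiliary sequence of length $o(mn/k)$, obtained e.g.\ from the standard $k$-radius construction on a small bipartite graph containing those edges. An equally plausible route is an explicit algebraic construction in which the labels at the $0$- and $1$-positions of the $t$-th copy of $C$ are determined by linear functions of $t$ over $\mathbb{Z}_m$ and $\mathbb{Z}_n$, using the periodicity of $\mathbf{b}$ so that every residue pair in $\mathbb{Z}_m\times\mathbb{Z}_n$ is hit by some good pair.

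The main obstacle is precisely this covering step: because the slack in $M$ is only a factor of $1+\varepsilon$, each edge has only $O(1)$ good pairs in expectation, so a naive independent random labelling leaves a constant fraction of edges uncovered and cannot be fixed by an $o(mn/k)$ patch. The fix is either (i) using a balanced (permutation-type) random labelling and exploiting negative correlation plus a second-moment or Lovász Local Lemma argument, combined with a short patch for the few remaining edges, or (ii) replacing the random labelling with a deterministic cyclic assignment exploiting the period $\ell$ of $C$ so that the good pairs within each period, shifted by all translates, hit every residue class in $\mathbb{Z}_m\times\mathbb{Z}_n$. Once the covering labelling is in hand, the computation of the length is immediate and matches, asymptotically, the lower bound from Corollary \ref{cor1}.
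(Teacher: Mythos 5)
You have correctly located the crux of the theorem --- with only $(1+\varepsilon/2)mn$ good pairs available and $mn$ edges to cover, each edge is served by $O(1)$ good pairs on average, so any ``generic'' labelling of a single long periodic sequence leaves a constant fraction ($\approx e^{-1-\varepsilon/2}mn$) of edges uncovered, and patching these costs $\Theta(mn/k)$ extra length, destroying the bound. But your two proposed escapes do not close this gap. The Lov\'asz Local Lemma is inapplicable: the bad events ``edge $xy$ uncovered'' have constant probability and enormous dependency degree, so no form of the LLL applies; a second-moment argument controls the variance of the number of uncovered edges but cannot force it to $o(mn)$ when its mean is $\Theta(mn)$; and negative correlation from a balanced labelling changes none of these orders of magnitude. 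Your alternative (ii), an explicit linear-algebraic assignment over $\mathbb{Z}_m\times\mathbb{Z}_n$, would amount to an explicit near-perfect covering design with overlap factor only $1+\varepsilon/2$; no such construction is exhibited, and producing one is a substantial problem in its own right (it is essentially the constructive question addressed separately for $K_n$ in \cite{DLRz-cyrkulanty}).

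The missing idea is to abandon the single long sequence and instead work with \emph{constant-length blocks}, invoking the Frankl--R\"odl near-perfect covering theorem (a R\"odl-nibble statement). The paper fixes a constant $q$ (depending on $k$ and $\varepsilon$), takes the binary word $\boldsymbol{c}$ of length $q\ell$ obtained from $q$ traversals of the optimal cycle, and forms the hypergraph $H$ on vertex set $X\times Y$ whose edges are, for \emph{every} injective realisation $\boldsymbol{a}$ of $\boldsymbol{c}$ by vertices of $K_{m,n}$, the set of the $r\ge (k-a_k)q\ell-\tfrac{k(k+1)}{2}$ pairs appearing at distance at most $k$ in $\boldsymbol{a}$. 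This hypergraph is $r$-uniform, $d$-regular, and has codegrees $o(d)$, so Frankl--R\"odl yields a covering by at most $(1+\varepsilon/2)\frac{mn}{r}$ blocks; concatenating the corresponding length-$q\ell$ sequences gives the claimed bound. The nibble is precisely the tool that selects, from the huge family of candidate blocks, an almost-disjoint subfamily covering everything --- the step your labelling argument cannot replicate by elementary means. Your bookkeeping of good pairs via Proposition \ref{norm} and the matching with Corollary \ref{cor1} are fine, but without this covering step the proof does not go through.
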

\begin{proof} We shall use the following theorem by Frankl and R\"odl \cite{FranklRodl} from hypergraph theory. {Here we allow multiple appearances of edges of a hypergraph.} Recall that a hypergraph is {\it $r$-uniform} if all its edges have cardinality $r$. It is {\it $d$-regular} if each of its vertices is contained in exactly $d$ edges. By the {\it codegree} ${\rm codeg}_H(v,u)$ of a pair of distinct vertices $v$ and $u$ in a hypergraph $H$ we mean the number of edges containing both $v$ and $u$. Finally, a {\it covering} of $H$ is a set of edges whose union is equal to the set of all vertices of $H$.

\medskip
\begin{minipage}{.95\linewidth}
{\it {\bf Theorem (Frankl, R\"odl \cite{FranklRodl})\footnotemark.} Let $r \in \mathbb{N}$ and $\delta > 0$ be fixed. There exist $d_0 \in \mathbb{N}$ and $\delta' >0$ such that for every $N \geq d \geq d_0$ the
following holds.
If $H$ is an $r$-uniform hypergraph with $N$ vertices satisfying the conditions:
\begin{enumerate}
\item $H$ is $d$-regular,
\item ${\rm codeg}_H(v,u) \leq \delta' \cdot d$ for any vertices $v,u$, $v\not=u$,
\end{enumerate}
then $H$ has a covering by at most $(1+\delta)\frac{N}{r}$ edges.
}
\end{minipage}
\footnotetext{This is a special case of a version of the original theorem that appears in Alon and Spencer \cite[Theorem 4.7.1]{AS}.}

\medskip
Let $C_k$ be a cycle in $B_k$ with the normalized weight equal to $a_k$. We denote by $\ell$ the length of $C_k$. Let $qC_k$ be the closed walk in $B_k$ obtained by traversing the cycle $C_k$ $q$ times, where $q=\left\lceil \frac{1+\varepsilon}{\varepsilon}\cdot\frac{k(k+1)}{\ell(k-a_k)}\right\rceil$. Clearly, the number of $k$-good pairs in the cyclic sequence $\boldsymbol{c}'$ corresponding to the closed walk $qC_k$ is $(k-a_k)q\ell$. Let $\boldsymbol{c}$ be the (non-cyclic) binary sequence of length $q\ell$ obtained from $\boldsymbol{c}'$ by cutting it at some point and let $r$ be the number of $k$-good pairs in $\boldsymbol{c}$.
{Observe that a $k$-good pair in $\boldsymbol{c}'$ is either still $k$-good in $\boldsymbol{c}$, or it is no longer at distance at most $k$.
{Since} {$\boldsymbol{c}$} has $\frac{k(k+1)}{2}$ {fewer} pairs at distance at most $k$ than {$\boldsymbol{c}'$}, we get $r\geq (k-a_k)q\ell-\frac{k(k+1)}{2}$.}
We denote by $c_0$ the number of $0$'s and by $c_1=q\ell - c_0$ the number of $1$'s in $\boldsymbol{c}$ . {Also, let $\delta'$ be {the value of the parameter in the Frankl-R\"{o}dl Theorem} for $\delta = \epsilon/2$ and {$r$ defined above.}}

Let $X$ and $Y$ be the bipartition classes in $K_{m,n}$, with $|X|=m$ and $|Y|=n$. We denote by $H$ the  hypergraph whose vertices are all ordered pairs $xy$ such that $x\in X$, $y\in Y$. For every sequence $\boldsymbol{a}$ of $q\ell$ distinct vertices in $K_{m,n}$ such that $\boldsymbol{b}(\boldsymbol{a})=\boldsymbol{c}$ we define an edge $e_{\boldsymbol{a}}$ in $H$. The edge $e_{\boldsymbol{a}}$ consists of such vertices $xy$ of $H$ that $x$ and $y$ are at distance at most $k$ in $\boldsymbol{a}$ {(see Table \ref{fig:exampleH} {for an example illustrating the construction of the hypergraph $H$})}. Clearly, there are $r$ such pairs $xy$ for every such $\boldsymbol{a}$, so $|e_{\boldsymbol{a}}|=r$ and, consequently, the hypergraph $H$ is $r$-uniform.

\begin{table}
\centering
\begin{tabular}{|l | l|}
\hline
sequence $\boldsymbol{a}$ & corresponding edge $e_{\boldsymbol{a}}$\\ \hline
$x_1,x_2,y_1,x_3,y_2$ & $\{x_1y_1,x_2y_1,x_3y_1,x_3y_2\}$ \\
$x_1,x_2,y_1,x_3,y_3$ & $\{x_1y_1,x_2y_1,x_3y_1,x_3y_3\}$ \\
$x_2,x_1,y_2,x_3,y_1$ & $\{x_2y_2,x_1y_2,x_3y_2,x_3y_1\}$ \\
{$x_1,x_2,y_2,x_3,y_1$} & $\{x_1y_2,x_2y_2,x_3y_2,x_3y_1\}$ \\
$x_1,x_2,y_2,x_3,y_3$ & $\{x_1y_2,x_2y_2,x_3y_2,x_3y_3\}$ \\
\multicolumn{1}{|c|}{$\vdots$} & \multicolumn{1}{|c|}{$\vdots$} \\
\end{tabular}
\caption{The construction of the hypergraph $H$ for $X=\{x_1,x_2,x_3\}$, $Y=\{y_1,y_2,y_3\}$, $\boldsymbol{c}=00101$, and $k=2$. Observe that the edge $\{x_1y_2,x_2y_2,x_3y_2,x_3y_1\}$ appears {a} multiple {number of} times.}
\label{fig:exampleH}
\end{table}

We observe that for every pair $xy$ there are exactly $d=r(m-1)\ldots(m-c_0+1)(n-1)\ldots(n-c_1+1)=\Theta(m^{c_0-1}n^{c_1-1})$ sequences $\boldsymbol{a}$ in which the vertices $x$ and $y$ are at distance at most $k$. Thus, the hypergraph $H$ is $d$-regular.

To estimate the maximum codegree in $H$ we consider three cases. Let us assume first that $u$ and $v$ are vertices in $H$ such that $u=xy_1,v=xy_2$, for some $x\in X$, $y_1,y_2\in Y$, where $y_1\not=y_2$. There are at most $r^2(m-1)\ldots(m-c_0+1)(n-1)\ldots(n-c_1+2)=\Theta(m^{c_0-1}n^{c_1-2})$ sequences $\boldsymbol{a}$ in which the vertex $x$ is at distance at most $k$ from both $y_1$ and $y_2$. Thus, the codegree ${\rm codeg}_{H}(u,v)=O(m^{c_0-1}n^{c_1-2})$.

If $u=x_1y$ and $v=x_2y$, where $y\in Y$, $x_1,x_2\in X$, $x_1\not=x_2$, then  we prove similarly that ${\rm codeg}_{H}(u,v)\leq r^2(m-1)\ldots(m-c_0+2)(n-1)\ldots(n-c_1+1)=O(m^{c_0-2}n^{c_1-1})$. Finally, let $u=x_1y_1,v= x_2y_2$, where $x_1,x_2\in X$, $y_1,y_2\in Y$, $x_1\not=x_2$, $y_1\not=y_2$. Then,  ${\rm codeg}_{H}(u,v)\leq r^2(m-1)\ldots(m-c_0+2)(n-1)\ldots(n-c_1+2)=O(m^{c_0-2}n^{c_1-2})$. In all the cases,  {we obtain that} ${\rm codeg}_{H}(u,v)<\delta'\cdot d$ for sufficiently large $m$ and $n$.

Thus, the assumptions of the Frankl-R\"odl Theorem are satisfied, so there is a covering of the vertex set of $H$ by at most ${(1+\delta)\frac{mn}{r}=}(1+\frac{\varepsilon}{2})\frac{mn}{r}$ edges. Let us consider a sequence obtained by concatenation of the sequences corresponding to these edges. In this sequence every two vertices forming an edge in $K_{m,n}$ are at distance at most $k$. The length of this sequence is at most
\begin{align*}
&\left(1+\frac{\varepsilon}{2}\right)\frac{mn}{r}q\ell \leq \left(1+\frac{\varepsilon}{2}\right)\frac{mn}{(k-a_k)q\ell-\frac{k(k+1)}{2}}q\ell\\
&= \left(1+\frac{\varepsilon}{2}\right)\frac{mn}{(k-a_k)\left(1-\frac{k(k+1)}{2q\ell(k-a_k)}\right)} \leq \frac{mn}{(k-a_k)}(1+\varepsilon),
\end{align*}
which completes the proof our theorem.
\end{proof}

In view of Theorem \ref{upper} and Corollary \ref{cor1} it would be interesting to find the exact values of $a_k$. The values of $a_k$ for $k\leq 5$ as well as the optimal cycles in $B_k$ (i.e. the cycles for which the normalized weight is equal to $a_k$) are shown in Table \ref{tabelka}. We denote by $(b_1b_2\ldots b_p)^*$ the cycle in $B_k$ whose consecutive edges are $b_1b_2\ldots b_k$, $b_2b_3\ldots b_{k+1}$, ... , $b_pb_1\ldots b_{k-1}$.

\begin{table}[h]
\centering
\begin{tabular}{|l | l | l|}
\hline
$k$ & $a_k$ & optimal cycles in $B_k$ \\ \hline
1 & 0 & $(01)^*$ \\
2 & 1/2 & $(0011)^*$ \\
3 & 1 & $(01)^*, (0011)^*, (000111)^*, (00011)^*, (00111)^*$ \\
4 & 4/3 & $(000111)^*$ \\
5 & 7/4 & $(00001111)^*$ \\ \hline
\end{tabular}
\caption{The values of $a_k$ and optimal cycles in $B_k$ for small $k$.}
\label{tabelka}
\end{table}

It is routine to show that the normalized weight of the cycle $(0^t1^t)^*$ ($t$ $0$'s followed by $t$ $1$'s) in $B_k$ is equal to $\frac{{t\choose 2}+{k-t+1\choose 2}}{t}$, for $\frac{k}{2}\leq t \leq k+1$. Let
\begin{equation}\label{def b_k}
{z_k}=\min_{\frac{k}{2}\leq t\leq k+1}\frac{{t\choose 2}+{k-t+1\choose 2}}{t}.
\end{equation}
Obviously, $a_k\leq {z_k}$. We conjecture that $a_k={z_k}$ for all positive integers $k$. The values of $a_k$ in Table \ref{tabelka} show that the conjecture is true for $k\leq 5$.

\medskip
Our next theorem gives a lower bound for $a_k$ which is ``very close'' to ${z_k}$.

\begin{theorem}\label{ograniczenia na a_k}
For every positive integer $k$,
\[
a_k\geq\sqrt{2k(k-1)}-k.
\]
\end{theorem}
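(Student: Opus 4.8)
The plan is to bound the normalized weight of every cycle in $B_k$ from below. By Proposition~\ref{norm} and the limit (\ref{limit}), $a_k=\lim_{s\to\infty}w_k(s)/s$, so it suffices to show that every cyclic binary sequence of length $s$ has at least $(\sqrt{2k(k-1)}-k)\,s$ $k$-bad pairs. I would work directly with the sequence rather than with $B_k$, and decompose it into its maximal runs of equal symbols, of lengths $r_1,r_2,\ldots,r_p$ with $\sum_j r_j=s$. It is worth noting up front that a second-moment (autocorrelation/Fourier) estimate, which only exploits that the power spectrum of the $\pm1$ sequence is nonnegative, yields a strictly weaker constant than $\sqrt 2-1$; so the argument must use the integrality of the run structure, not merely a quadratic relaxation.

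The whole argument is driven by one elementary optimization, which I would isolate first. For a run of length $r$ I intend to charge it the $\binom{r}{2}$ bad pairs lying inside it (all at distance at most $r-1\le k$) together with about $\binom{k-r}{2}$ further bad pairs that join it, across the adjacent opposite run, to the next run of the same symbol. The reason for using $\binom{k-r}{2}$ in place of the exact straddling count is that the resulting per-run density collapses to a clean expression to which AM--GM applies:
$$\frac{\binom{r}{2}+\binom{k-r}{2}}{r}=r+\frac{k(k-1)}{2r}-k\ \ge\ 2\sqrt{\tfrac{k(k-1)}{2}}-k=\sqrt{2k(k-1)}-k,$$
with equality at $r=\sqrt{k(k-1)/2}$. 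Hence, once each run of length $r_j$ is shown to be responsible for at least $\binom{r_j}{2}+\binom{k-r_j}{2}$ bad pairs, summing the inequality $\binom{r_j}{2}+\binom{k-r_j}{2}\ge(\sqrt{2k(k-1)}-k)\,r_j$ over all runs and using $\sum_j r_j=s$ finishes the proof. This also makes transparent why the bound falls just short of the conjectured $z_k$ from (\ref{def b_k}): replacing the exact $\binom{k-r+1}{2}$ by $\binom{k-r}{2}$ is precisely the small slack I trade for the clean estimate, and the optimal run length $\sqrt{k(k-1)/2}$ is the analogue of the minimizer defining $z_k$.

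The hard part is the combinatorial charging step, i.e.\ justifying that each run genuinely accounts for $\binom{r_j}{2}+\binom{k-r_j}{2}$ bad pairs. The internal term $\binom{r_j}{2}$ is immediate. For the crossing term I would count, for a run $R_j$ of length $r_j$ sitting between two runs $R_{j-1},R_{j+1}$ of the opposite-then-same symbol, the same-symbol pairs that straddle $R_j$ at distance at most $k$; when $R_{j-1}$ and $R_{j+1}$ are long enough (length at least $k-r_j$) this number is $\binom{k+1-r_j}{2}\ge\binom{k-r_j}{2}$, and assigning $\binom{k-r_j}{2}$ of them to $R_j$ double-counts nothing. The genuine obstacle is short runs with short neighbours: there the single-skip count is truncated below $\binom{k-r_j}{2}$, so the clean per-run inequality can fail locally (for the all-ones-runs sequence $0101\ldots$ it is false outright). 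The saving grace is that such configurations are far from optimal — short runs force many \emph{multi-skip} bad pairs, which my single-skip bookkeeping ignores — so the true bad-pair count only increases there. I would therefore handle this regime by a smoothing/exchange argument: show that any sequence can be modified, without increasing its number of $k$-bad pairs (equivalently, without increasing $t_k$ of the corresponding closed walk in $B_k$), until all runs have length at least $k/2$, after which every relevant neighbour is long enough for the clean per-run charge to be valid. Making this exchange step rigorous — verifying that merging or redistributing short runs never creates bad pairs faster than it removes them — is where I expect the main technical effort to lie.
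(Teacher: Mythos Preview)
Your per-run target charge $\binom{r}{2}+\binom{k-r}{2}$ and the AM--GM step are exactly the heart of the paper's argument, and your identification of short runs as the only obstacle is correct. The gap is the smoothing step, and it is a real one. Single-bit flips cannot help: in a minimum sequence, flipping any $b_i$ changes the bad-pair count by $\ell_i-\overline\ell_i\ge 0$, so the most natural exchange is blocked precisely by minimality. More globally, short runs genuinely occur in optimal sequences (for $k=3$ the cycle $(01)^*$ with run length $1<k/2$ already attains $a_3$; see Table~\ref{tabelka}), so at best you would have to prove the existential statement ``some minimizer of length $s$ has all runs $\ge k/2$'', which is a nontrivial structural claim with no obvious local certificate. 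Your proposal does not indicate what modification would achieve this.

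The paper sidesteps smoothing entirely, and the missing idea is this: work with a minimum sequence and use minimality \emph{pointwise}. Writing $\overline\ell_i$ (resp.\ $\ell_i$) for the number of $k$-bad (resp.\ $k$-good) pairs through position $i$, the flip argument gives $\ell_i\ge k$ for every $i$ (Claim~\ref{obs1}). One then checks that $\overline\ell_i$ changes by at most $1$ along a run (Claim~\ref{obs2}) and that at a run boundary $\overline\ell_i\ge \ell_{i\pm 1}-2\ge k-2$ (Claim~\ref{obs3}); propagating these gives, for a run $b_h\ldots b_{h+t-1}$, the bounds $\overline\ell_{h+j}\ge \max(t-1,\ k-2-j,\ k-t-1+j)$ (Claims~\ref{obs4}--\ref{obs5}), valid for \emph{every} $t$, short or long. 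Averaging this maximum over $j=0,\ldots,t-1$ and minimizing over $t$ reproduces exactly your $r+\tfrac{k(k-1)}{2r}-k$ in the relevant range and yields $\frac{1}{t}\sum_j\overline\ell_{h+j}\ge 2\sqrt{2k(k-1)}-2k$; summing over runs and halving gives the theorem. In short: rather than forcing long runs, the paper uses the local constraint $\ell_i\ge k$ at the two boundaries of each run as a substitute for ``long neighbours'', and this is precisely what makes the $\binom{k-r}{2}$-type contribution survive even when the neighbouring runs are short.
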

\begin{proof} Clearly, the theorem holds for $k=1$, so we assume from now on that $k\geq 2$.

Consider a cyclic binary sequence $\boldsymbol{b} = b_0b_1\ldots b_{s-1}$ of length $s$ with minimum possible number $w_k(s)$ of $k$-bad pairs. For $i = 0,1,\ldots,s-1$ let $\overline{\ell}_i$ (resp. $\ell_i$) denote the number of $k$-bad (resp. $k$-good) pairs containing the term $b_i$. Clearly $\overline{\ell}_i + \ell_i = 2k$ for all $i$ and $w_k(s) = \frac{1}{2}\sum_{i=0}^{s-1} \overline{\ell}_i$. The following Claims \ref{obs1}-\ref{obs5} concern the sequence $\boldsymbol{b}$.

\begin{clm} \label{obs1}
For all $i$ we have $\overline{\ell}_i \leq k \leq \ell_i$.
\end{clm}
\begin{inproof}
Clearly, it suffices to show that $\ell_i \geq k$ because $\overline{\ell}_i  = 2k - \ell_i$. Suppose on the contrary that $\ell_i < k$ and assume without loss of generality that $b_i = 0$. Then, by switching the value of $b_i$ to $1$ we get a {cyclic} sequence with a smaller number of bad pairs which contradicts the minimality of the sequence $\boldsymbol{b}$.
\end{inproof}

Let $p_i$ denote the number of terms of $\boldsymbol{b}$ of value equal to $b_i$ which are at distance at most $k$  from both $b_i$ and $b_{i+1}$, excluding $b_i$ itself. Then
\begin{equation}\label{values of ell_i}
\overline{\ell}_i=p_i \quad {\rm or} \quad \overline{\ell}_i=p_i+1
\end{equation}
depending on the value of the term $b_{i-k}$.

\begin{clm} \label{obs2}
If $b_i = b_{i+1}$, then $\overline{\ell}_i -1 \leq \overline{\ell}_{i+1} \leq \overline{\ell}_i +1$ and $\ell_i -1 \leq \ell_{i+1} \leq \ell_i +1$.
\end{clm}

\begin{inproof}
As $b_i = b_{i+1}$, we have $\overline{\ell}_{i+1}=p_i$ or $\overline{\ell}_{i+1}=p_i+1$ depending on the value of the term $b_{i+1+k}$. Our statement follows now immediately by (\ref{values of ell_i}) and the equality $\overline{\ell}_i + \ell_i = 2k$.
\end{inproof}

\begin{clm} \label{obs3}
If $b_i \neq b_{i+1}$, then  $\ell_{i+1}-2 \leq \overline{\ell}_i \leq \ell_{i+1}$ and $\ell_i - 2 \leq \overline{\ell}_{i+1} \leq \ell_i$.
\end{clm}

\begin{inproof}
As $b_i \not= b_{i+1}$, $\ell_{i+1}=p_i+1$ or $\ell_{i+1}=p_i+2$ depending on the value of the term $b_{i+1+k}$ because $i(i+1)$ is a good pair. As before the statement follows by (\ref{values of ell_i}) and the equality $\overline{\ell}_i + \ell_i = 2k$. \end{inproof}

\medskip
Consider a maximal segment $\boldsymbol{b'}=b_{h}b_{h+1}\ldots b_{h+t-1}$ in {$\boldsymbol{b}$} consisting of terms of the same value (all $0$'s or all $1$'s). Clearly, the length $t$ of this segment satisfies the inequality $t\leq k+1$ because otherwise $\overline{\ell}_{h+1}>k$ contradicting Claim \ref{obs1}. The Claims \ref{obs4}-\ref{obs5} concern terms of any maximal segment $\boldsymbol{b'}$ of terms of the same value in $\boldsymbol{b}$.

\begin{clm} \label{obs4}
For every $j = 0,1,\ldots,t-1$, we have $\overline{\ell}_{h+j} \geq t-1$.
\end{clm}

\begin{inproof}
This observation follows from the fact that $t\leq k+1$ and the pairs $(h+i)(h+j)$, where $0\leq i< j\leq t-1$, are $k$-bad.
\end{inproof}

\begin{clm} \label{obs5}
For every $j = 0,1,\ldots,t-1$, we have $\overline{\ell}_{h+j} \geq k-2-j$ and $\overline{\ell}_{h+j} \geq k-t-1+j$.
\end{clm}

\begin{inproof}
Applying, in turn, Claims \ref{obs2}, \ref{obs3} and \ref{obs1} we get
\[
\overline{\ell}_{h+j} \geq \overline{\ell}_{h+j-1}-1\geq\ldots\geq \overline{\ell}_h-j \geq \ell_{h-1}-2-j \geq k-j-2.
\]
Similarly, using the same claims,
\[
\overline{\ell}_{h+j} \geq \overline{\ell}_{h+j+1}-1\geq\ldots\geq \overline{\ell}_{h+t-1}-t+1+j \geq \ell_{h+t}-2-t+1+j \geq k-t-1+j.
\]
\end{inproof}

For a maximal segment $\boldsymbol{b'}=b_{h}b_{h+1}\ldots b_{h+t-1}$ in $\boldsymbol{b}$ of terms of the same value we define the {\it score} $Sc(\boldsymbol{b'})=\frac{1}{t}\sum_{j=0}^{t-1} \overline{\ell}_{h+j}$. Clearly, by Claims \ref{obs4} and \ref{obs5},
\begin{multline}\label{ograniczenie gorne na dobre pary}
Sc(\boldsymbol{b'}) \geq \frac{1}{t}\sum_{j=0}^{t-1}\max(k-2-j,k-t-1+j,t-1) \\= k-2 - \frac{1}{t}\sum_{j=0}^{t-1}\min(j,t-1-j,k-t-1).
\end{multline}
We consider three cases.


\medskip
\noindent
{\textbf{Case 1:} $t < \frac{2k-1}{3}$}

\medskip
In this case $\min(j,t-1-j,k-t-1)$ reduces to $\min(j,t-1-j)$, so by (\ref{ograniczenie gorne na dobre pary}), we get
\begin{equation}\label{case1}
Sc(\boldsymbol{b'}) \geq k-2 - \frac{(t-1)^2}{4t}.
\end{equation}



\medskip
\noindent
{\textbf{Case 2:} $\frac{2k-1}{3}\leq t \leq k-1$}

\medskip
By (\ref{ograniczenie gorne na dobre pary}), we get
\begin{multline}\label{case2}
Sc(\boldsymbol{b'}) \geq k-2 - \frac{1}{t}[0+1+\ldots+(k-t-1) + (t-2(k-t))(k-t-1)\\ + (k-t-1)+(k-t-2)+\ldots+0]\\
=k-2 - \frac{1}{t}(k-t-1)(2t-k).
\end{multline}

\medskip
\noindent
{\textbf{Case 3:} $k\leq t\leq k+1$}

\medskip
In this case $\min(j,t-1-j,k-t-1)$ reduces to $k-t+1$, so by (\ref{ograniczenie gorne na dobre pary}), we get
\begin{equation}\label{case3}
Sc(\boldsymbol{b'}) \geq t-3.
\end{equation}

Let
\[
u(t)=\left\{ \begin{array}{ll}
   k-2 - \frac{(t-1)^2}{4t}  & \mbox{for $t<\frac{2k-1}{3}$} \\
   k-2 - \frac{1}{t}(k-t-1)(2t-k) & \mbox{for $\frac{2k-1}{3}\leq t \leq k-1$}\\
   t-3 & \mbox{for $k\leq t\leq k+1$}
 \end{array}
 \right.
\]
be a real valued function. It is routine to check that for $k\geq 2$, $u(t)$ reaches its minimum value in the interval $[1,k+1]$ at $t=\sqrt{\frac{k(k-1)}{2}}$.

Thus, by the inequalities (\ref{case1}), (\ref{case2}) and (\ref{case3})
\begin{equation}\label{score}
Sc(\boldsymbol{b'}) \geq u\left(\sqrt{\frac{k(k-1)}{2}}\right) = 2\sqrt{2k(k-1)}-2k.
\end{equation}

We divide $\boldsymbol{b}$ to maximal segments $\boldsymbol{b_1},\boldsymbol{b_2},\ldots,\boldsymbol{b_r}$ of terms of the same value. Let $t_i$ denote the length of $\boldsymbol{b_i}$. Clearly, the number of $k$-bad pairs in $\boldsymbol{b}$ is $\frac{1}{2}\sum_{i=0}^{s-1}\overline{\ell}_i=\frac{1}{2}\sum_{i=1}^r Sc(\boldsymbol{b_i})t_i$. Thus, by (\ref{score}),

\begin{align*}
w_k(s)=& \frac{1}{2}\sum_{i=1}^r Sc(\boldsymbol{b_i})t_i \geq \sum_{i=1}^r (\sqrt{2k(k-1)}-k)t_i = (\sqrt{2k(k-1)}-k)s.
\end{align*}
Hence, by (\ref{limit}), $a_k=\lim_{s\rightarrow\infty}\frac{w_k(s)}{s}\geq \sqrt{2k(k-1)}-k$.

\end{proof}

\bigskip
We have shown that
\begin{equation}\label{bounds}
\sqrt{2k(k-1)}-k\leq a_k\leq {z_k}.
\end{equation}
We shall see now that these bounds for $a_k$ are very close to each other.

First observe that there exists a positive integer $t$ such that
\begin{equation}\label{integer minimum}
t+\frac{(k+1)k}{2t}\leq \sqrt{2(k+1)k+1}.
\end{equation}
Indeed, consider the function $f(x)=x+\frac{(k+1)k}{2x}$ and let $x_1=\sqrt{\frac{(k+1)k}{2}+\frac{1}{4}}-\frac{1}{2}$. It is easy to verify that for $x\in [x_1,x_1+1]$, $f(x)\leq f(x_1)=f(x_1+1)=\sqrt{2(k+1)k+1}$, so we define $t$ to be the unique integer in the interval $[x_1,x_1+1)$.

By (\ref{def b_k}) and (\ref{integer minimum}) we get
\[
{z_k}=\min_{\frac{k}{2}\leq t\leq k+1} \left(t+\frac{(k+1)k}{2t}-k-1\right) \leq \sqrt{2(k+1)k+1}-k-1.
\]
Using the inequality above one can readily verify that the difference between the upper and the lower bound for $a_k$ given in (\ref{bounds}) is smaller than $0.5$ for $k\geq 5$ (and it tends to $\sqrt{2}-1$ as $k$ tends to infinity). Thus, since the actual value of $a_4$ differs from the lower bound in (\ref{bounds}) by less than $0.5$ too (see Table \ref{tabelka}), we have the following statement.
\begin{col}
\label{cor2}
For all integers $k\geq 4$,
\[
\sqrt{2k(k-1)}-k\leq a_k< \sqrt{2k(k-1)}-k+1/2.
\]
\end{col}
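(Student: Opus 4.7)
The lower bound in the statement is a direct restatement of Theorem \ref{ograniczenia na a_k}, so only the upper bound $a_k < \sqrt{2k(k-1)} - k + 1/2$ needs proof. My plan is to chain two estimates already in hand: $a_k \leq z_k$ from (\ref{bounds}), and $z_k \leq \sqrt{2(k+1)k+1} - k - 1$ from the calculation in the paragraph preceding the corollary. Together these reduce the task to verifying the single algebraic inequality
\[
g(k) := \sqrt{2k(k+1)+1} - \sqrt{2k(k-1)} < \frac{3}{2}
\]
for every integer $k \geq 5$; the case $k = 4$ must be handled separately because one computes $g(4) = \sqrt{41} - \sqrt{24} \approx 1.504$, which is just above the threshold.

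For the main inequality I would set $A = \sqrt{2k(k+1)+1}$ and $B = \sqrt{2k(k-1)}$, note that $A^2 + B^2 = 4k^2 + 1$, and rewrite $g(k)^2 < 9/4$ as $AB > 2k^2 - 5/8$. Both sides are positive for $k \geq 1$, so squaring once more gives
\[
4k^4 - 2k^2 - 2k > 4k^4 - \frac{5k^2}{2} + \frac{25}{64},
\]
which simplifies to the quadratic inequality $k^2 - 4k - 25/32 > 0$. Its positive root is approximately $4.19$, so the inequality holds for every integer $k \geq 5$.

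The remaining case $k = 4$ is settled by Table \ref{tabelka}, which gives $a_4 = 4/3$; one verifies $4/3 < \sqrt{24} - 7/2$ by rearranging to $29/6 < \sqrt{24}$ and squaring to $841 < 864$. I expect no real obstacle in this plan --- it amounts to two algebraic reductions plus a single table lookup --- but it is worth emphasising that the inequality $g(k) < 3/2$ is genuinely tight: $g(k) \to \sqrt{2} \approx 1.414$, only marginally below $3/2$, which is exactly why the chained bound fails at $k = 4$ and one has to fall back on the tabulated value of $a_4$.
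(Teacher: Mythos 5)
Your proposal is correct and follows the paper's own route exactly: the lower bound is Theorem \ref{ograniczenia na a_k}, the upper bound for $k\geq 5$ comes from chaining $a_k\leq z_k\leq\sqrt{2(k+1)k+1}-k-1$ and checking that the gap to the lower bound is below $1/2$, and $k=4$ is handled via $a_4=4/3$ from Table \ref{tabelka}. You have merely made explicit the algebra the paper dismisses as ``readily verified,'' and your computations (including the threshold root near $4.19$ and the check $841<864$) are accurate.
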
\hfill$\Box$

\noindent
\begin{proof}[Proof of Theorem \ref{theorem_KmnUpper}.] Theorem \ref{theorem_KmnUpper} follows immediately from Theorem \ref{upper} and Corollary \ref{cor1} by defining $d_k=\frac{k}{k-a_k}$. The bounds for $d_k$ given in Theorem \ref{theorem_KmnUpper} can be easily obtained from Corollary \ref{cor2} for $k\geq 4$ and by direct computations (using Table \ref{tabelka}) for $k<4$.
\end{proof}

\begin{proof}[Proof of Corollary \ref{corollary_mcincirculant}]
Let $v_1,\ldots,v_n$ be the vertices of $C_n^k$, in a natural order. Any cyclic binary sequence ${\bf b}=b_1b_2\ldots b_n$ defines a bipartition $(V_0,V_1)$ of the vertex set of $C_n^k$: a vertex $v_i$ goes to $V_0$ if $b_i=0$ and it goes to $V_1$ otherwise. One can readily verify that the number of good pairs in ${\bf b}$ is equal to the size of the bipartition $(V_0,V_1)$, i.e. the number of edges joining vertices of the two sets $V_0$ and $V_1$. Consequently, the  size of a maximum cut in $C_n^k$ is equal to the maximum number $kn-w_k(n)$ of good pairs in a cyclic binary sequence of length $n$. Therefore, the proof is complete by the equalities (\ref{limit}) and $d_k=\frac{k}{k-a_k}$.
\end{proof}

\section{Complexity results for arbitrary graphs} \label{complexity}

In this section we consider problems of finding the numbers $f_k(G)$ and $c_k(G)$ for arbitrary connected graphs $G$.


Let us first make the definition of $c_k(G)$ a bit more precise. We define for a graph $G$ and $k<|V(G)|$ a {\it $k$-cover sequence} $\mathbf{c} = c_1,\ldots,c_m$ to be a sequence of $(k+1)$-subsets of $V(G)$ such that every two consecutive sets in $\mathbf{c}$ differ by one element (that is, $\left| c_i \setminus c_{i+1} \right| = \left| c_{i+1} \setminus c_{i} \right| = 1$ for $i=1, \ldots, m-1$) and for every edge $e\in E(G)$ we have $e\subseteq c_i$ for some $i$. Clearly, a $k$-cover sequence describes replacements of objects in the cache; if we assume that at time $0$ the cache holds the set $c_1$, then at time $t$ (for $1\leq t\leq m-1$) we replace the only object of $c_{t}\setminus c_{t+1}$ by the only object of $c_{t+1}\setminus c_{t}$. The number of read operations in this scenario is equal to $m+k$ (because we have to read the $k+1$ elements of $c_1$ at start). Consequently, $c_k(G)$ is equal to the sum of $k$ and the length of a shortest $k$-cover sequence for $G$.

We shall discuss the computational complexities of the following two {families of} decision problems.

\medskip
\begin{framed}
\noindent{\textbf{Problem:}} {\kradius{k}}\\
{\textbf{Instance:}} A connected graph $G$ and an integer $M$.\\
{\textbf{Question:}} Is there a $k$-radius sequence of length $M$ for the graph $G$?
\end{framed}

\medskip
\begin{framed}
\noindent{\textbf{Problem:}} {\kcover{k}}\\
{\textbf{Instance:}} A connected graph $G$ and an integer $M$.\\
{\textbf{Question:}} Is there a $k$-cover sequence of length $M$ for the graph $G$?
\end{framed}
\medskip

Let us consider the {\kradius{k}} problem first.

We start with a simple lower bound for the value of $f_k(G)$.

\begin{prop} \label{lower}
For any graph $G$, $f_k(G) \geq \sum_{v \in V} \left \lceil \frac{\deg v}{2k} \right \rceil$.
\end{prop}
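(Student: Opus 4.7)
The plan is to bound from below, for each vertex $v$, the number of times $v$ must occur in any $k$-radius sequence, and then to sum these bounds. Let $x_1,x_2,\ldots,x_m$ be an arbitrary $k$-radius sequence for $G$ of length $m=f_k(G)$, and for each vertex $v\in V(G)$ let $n_v$ denote the number of indices $i$ with $x_i=v$. Since every term of the sequence is some vertex of $G$, we have $m=\sum_{v\in V(G)} n_v$, so it suffices to prove the pointwise inequality $n_v\geq \lceil \deg(v)/(2k)\rceil$ for every $v$.

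To establish this, I would argue as follows. Fix a vertex $v$ and consider any occurrence of $v$ at some position $i$. The positions at distance at most $k$ from $i$, other than $i$ itself, are the at most $2k$ indices in $\{i-k,\ldots,i-1,i+1,\ldots,i+k\}\cap\{1,\ldots,m\}$, so the set of vertices appearing within distance $k$ of this particular occurrence has cardinality at most $2k$. Consequently, the set $N_i$ of neighbors of $v$ that are \emph{witnessed} by the occurrence at position $i$ (i.e.\ neighbors $u$ of $v$ that appear at some position $j$ with $|i-j|\leq k$) satisfies $|N_i|\leq 2k$. Because the sequence has the $k$-radius property, every neighbor of $v$ must be witnessed by some occurrence of $v$, so the $n_v$ sets $N_i$ (taken over all positions $i$ where $x_i=v$) cover the whole neighborhood of $v$.

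Combining these observations, we obtain $\deg(v)\leq 2k\cdot n_v$, and since $n_v$ is a non-negative integer this forces $n_v\geq \lceil \deg(v)/(2k)\rceil$. Summing over all vertices of $G$ yields
\[
f_k(G)=m=\sum_{v\in V(G)} n_v \geq \sum_{v\in V(G)} \left\lceil \frac{\deg(v)}{2k}\right\rceil,
\]
which is the desired inequality. There is no real obstacle here: the only subtlety is to notice that the bound $|N_i|\leq 2k$ is \emph{per occurrence}, so one must not double-count by conflating occurrences of $v$ with its neighbors; the bookkeeping is handled cleanly by introducing the multiplicities $n_v$ and summing at the end.
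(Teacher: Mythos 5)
Your proof is correct and follows essentially the same approach as the paper: count the occurrences of each vertex, observe that each occurrence can witness at most $2k$ neighbors, and sum the resulting per-vertex lower bounds over all of $V(G)$.
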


\begin{proof}
Consider a shortest $k$-radius sequence $\bf x$ for $G$ and for each vertex $v$ let $m(v)$ denote the number of appearances of $v$ in $\bf x$. For every appearance of $v$ in $\bf x$, at most $2k$ neighbors of $v$ appear at distance at most $k$ in $\bf x$. Thus $m(v) \geq \left \lceil \frac{\deg v}{2k} \right \rceil$ and $f_k(G) = \sum_{v \in V} m(v) \geq \sum_{v \in V} \left \lceil \frac{\deg v}{2k} \right \rceil$.
\end{proof}

First consider the case $k=1$. The problem {\kradius{1}} asks for a sequence of vertices of $G$, in which the endvertices of every edge appear as consecutive elements.
Let $G$ be a connected graph with $n$ vertices and $m$ edges. Denote by $n_o$ the number of vertices of odd degree in $G$. We add $n_o/2$ edges between them, creating a multigraph $G'$, which has an Euler circuit. This Euler circuit corresponds to a 1-radius sequence in $G$ of length $ m + n_o/2$, which matches the lower bound given in Proposition \ref{lower}. Thus the problem {\kradius{1}} is polynomially solvable.

\begin{theorem}
For $k\geq 2$ the problem {\kradius{k}} is NP-complete.
\end{theorem}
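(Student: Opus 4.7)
My plan is to prove that $\kradius{k}$ is in NP, and then to prove NP-hardness by a polynomial-time reduction from a classical NP-hard problem. Membership in NP is immediate: a candidate sequence of length $M$ is a polynomial-size witness, and the $k$-radius property can be checked in polynomial time by scanning all pairs of positions within distance $k$ and verifying that every edge of $G$ is covered by at least one such pair.

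For NP-hardness, the natural approach is to reduce from Hamiltonian Cycle (or Hamiltonian Path) in a restricted graph class where that problem remains NP-hard, e.g.\ cubic graphs. Given an instance $H$, the goal is to build an auxiliary graph $G$ together with a target length $M$ so that $G$ admits a $k$-radius sequence of length $M$ if and only if $H$ has a Hamiltonian cycle. A plausible construction is to attach carefully designed degree-boosting gadgets (for example pendant vertices, short pendant paths, or small cliques) to each vertex of $H$, chosen so that the lower bound from Proposition \ref{lower} matches $M$ \emph{exactly}. The gadgets play a double role: they inflate the degree of each $v \in V(H)$ so that $v$ must appear a prescribed number of times, and they act as rigid markers that constrain where each appearance of $v$ may sit in the sequence.

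The forward direction should be the easier part: from a Hamiltonian cycle $v_1,\ldots,v_n,v_1$ of $H$, I would produce a $k$-radius sequence by walking along the cycle and inserting the gadget vertices into the slots between consecutive $H$-vertices so that every edge of $G$ — both the $H$-edges and the gadget-edges — is covered within distance $k$. A direct length computation should match the target $M$ dictated by Proposition \ref{lower}.

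The main obstacle will be the backward direction: proving that \emph{any} $k$-radius sequence of length $M$ for $G$ must encode a Hamiltonian traversal of $H$. Matching the lower bound of Proposition \ref{lower} forces each vertex to appear the minimum possible number of times, and each appearance to ``saturate'' its length-$2k$ window with distinct neighbors; a careful local analysis of how the gadget vertices occupy these windows should then show that the $H$-vertices must appear in an order consistent with a Hamiltonian cycle of $H$. Designing the gadgets so that this forcing is genuine for \emph{every} fixed $k \geq 2$ — in particular for $k=2$, where the windows are short and there is little slack — is the delicate technical point, and may require a separate or modified construction for small $k$.
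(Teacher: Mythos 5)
There is a genuine gap: your argument for the backward direction --- the heart of the NP-hardness proof --- is only a declaration of intent (``a careful local analysis \dots should then show''), and the specific mechanism you propose for it is doubtful. You plan to choose gadgets so that the target length $M$ equals the lower bound of Proposition \ref{lower} exactly, so that every vertex appears the minimum number of times and every window is saturated. But forcing exact equality with that bound is essentially a perfect-packing condition, and it is far from clear that it can be arranged for a graph that encodes an arbitrary cubic instance; the paper does \emph{not} do this. In the paper's construction the target $kn+1$ strictly exceeds the bound of Proposition \ref{lower} (which for that $G$ equals $kn-\frac{n}{2}$), and the rigidity comes instead from a two-sided count: a spacing argument shows that at most $n-1$ of the ``edge-vertices'' can appear exactly once, while a global length count shows that at least $n-1$ must, so exactly $n-1$ do, and these are forced to sit at pairwise distance exactly $k$ and to form an induced path.

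The missing idea that makes this work is the choice of $G$ as the \emph{line graph} of the cubic triangle-free graph $F$ augmented with $k-2$ pendant edges per vertex. This gives every vertex of $G$ corresponding to an edge of $F$ degree exactly $2k$, so a single appearance of such a vertex must have its whole radius-$k$ window filled by its neighbourhood; triangle-freeness of $F$ then pins down the $k-1$ common neighbours of two adjacent edge-vertices and forces them to occupy the intermediate positions (Claim \ref{NPclaim}). Your sketch of ``pendant vertices, short pendant paths, or small cliques'' attached directly to $H$ does not supply any analogue of this forcing, and you yourself flag that the case $k=2$ may need a different construction --- a sign that the decisive gadget has not been found. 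Until the backward direction is actually carried out with a concrete construction, the reduction is not established.
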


\begin{proof} The problem is clearly in the class NP. To prove NP-hardness, we shall show a reduction from the problem of determining existence of a Hamiltonian path in a cubic triangle-free graph. Determining the existence of a Hamiltonian path in planar, cubic, 3-connected graphs with every face bounded by at least 5 edges is NP-complete, see Garey and Johnson \cite[p. 199]{GJ}. These graphs are triangle-free; if there were a triangle $uvw$ then, by $3$-connectivity and $3$-regularity, the remaining neighbors of $u,v$ and $w$ would be either all inside or all outside the triangle, so $uvw$ would be a face bounded by $3$ edges, a contradiction.

Let $F$ be a cubic triangle-free graph with $n$ vertices. For every vertex $v$ in $F$ we add $k-2$ pendant edges  $e_v^1,\ldots,e_v^{k-2}$ incident with $v$ and call the resulting graph $F'$. We define $G$ to be the line graph of $F'$. We shall prove that there exists a $k$-radius sequence for $G$ of length $\frac{2}{k+1}e(G)+1$ if and only if $F$ has a Hamiltonian path.

Assume first that $F$ has a Hamiltonian path $H$ whose consecutive vertices are $v_1,v_2,\ldots,v_n$. For $i=1,2,\ldots,n-1$, let $e_i$ be the edge $v_iv_{i+1}$. For $i=2,3,\ldots,n-1$, we define $f_i$ to be the edge in $F$ incident with $v_i$ which is not in $H$. Moreover, let $e_0,f_1$ (resp. $f_n$, $e_n$) be the edges incident in $F$ with $v_1$ (resp. $v_n$) which are not in $H$ (it is possible that $e_0=e_n$). Clearly, each edge of $F$ which is not an edge of $H$ appears twice in the sequence $e_0,f_1,f_2,\ldots,f_n,e_n$ (see Figure \ref{reduction-rk}).

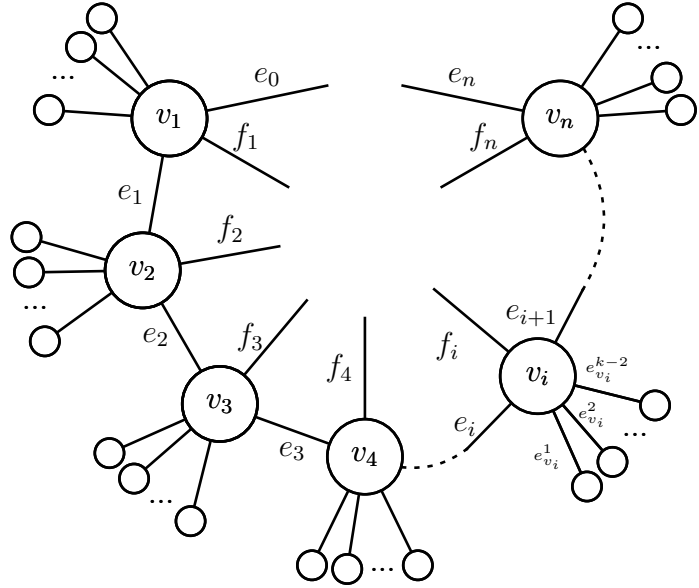
\begin{figure}
\centering
\begin{tikzpicture}[line cap=rect,line width=1pt,scale=1.0]

\foreach \i/\ang in {1/-210,2/-170,3/-130,4/-90,i/-40,n/30}
{
\node[draw, circle, minimum size = 1cm] (v\i) at (\ang:3cm) {$v_{\i}$};
\node (y\i) at (\ang:1cm) {};

\node[draw, circle] (x\i1) at (\ang-9:4.5cm) {};
\node[draw, circle] (x\i2) at (\ang-3:4.5cm) {};
\node (x\i3) at (\ang+3:4.5cm) {...};
\node[draw, circle] (x\i4) at (\ang+9:4.5cm) {};

\node[draw, circle, minimum size = 1cm] (v\i) at (\ang:3cm) {$v_{\i}$};
\node (y\i) at (\ang:1cm) {};

\node[draw, circle] (x\i1) at (\ang-9:4.5cm) {};
\node[draw, circle] (x\i2) at (\ang-3:4.5cm) {};
\node (x\i3) at (\ang+3:4.5cm) {...};
\node[draw, circle] (x\i4) at (\ang+9:4.5cm) {};

\draw[-] (v\i) --  (x\i1);
\draw[-] (v\i) --  (x\i2);
\draw[-] (v\i) --  (x\i4);
}
\node (y11) at (-260:2cm) {};
\node (ynn) at (80:2cm) {};

\draw[-] (v1) -- node[left]{$e_1$} (v2);
\draw[-] (v2) -- node[left]{$e_2$} (v3);
\draw[-] (v3) -- node[below]{$e_3$} (v4);

\coordinate (vip) at (-15:3cm);
\coordinate (vim) at (-65:3.2cm);
\draw[loosely dotted] (v4) to[bend right=20]  (vim);
\draw[-] (vim) to node[above, left]{$e_i$}  (vi);
\draw[-] (vi) to node[above, left] {$e_{i+1}$}  (vip);
\draw[loosely dotted] (vip) to[bend right=30]  (vn);

\draw[-] (v1) -- node[above]{$e_0$} (y11);
\draw[-] (v1) -- node[above]{$f_1$} (y1);
\draw[-] (v2) -- node[above]{$f_2$} (y2);
\draw[-] (v3) -- node[left]{$f_3$} (y3);
\draw[-] (v4) -- node[left]{$f_4$} (y4);
\draw[-] (vi) -- node[below left]{$f_i$} (yi);
\draw[-] (vn) -- node[above]{$f_n$} (yn);
\draw[-] (vn) -- node[above]{$e_n$} (ynn);

\node[draw, circle, minimum size = 1cm] (v11) at (-210:3cm) {};
\node[draw, circle, minimum size = 1cm] (v12) at (-170:3cm) {};
\node[draw, circle, minimum size = 1cm] (v13) at (-130:3cm) {};

\draw[-] (vi) --node[near end, above, left]{\tiny $e^1_{v_i}$}  (xi1);
\draw[-] (vi) --node[near end, above]{\tiny $e^2_{v_i}$}  (xi2);
\draw[-] (vi) --node[above]{\tiny $e^{k-2}_{v_i}$}  (xi4);
\end{tikzpicture}

\caption{The reduction from the Hamiltonian path problem in cubic triangle-free graphs to {\kradius{k}}.}
\label{reduction-rk}
\end{figure}

We claim that
\[
e_0,f_1,e^1_{v_1},\dots,e^{k-2}_{v_1},e_1,f_2,e^1_{v_2},\dots,e^{k-2}_{v_2}e_2,\ldots,e_{n-1},f_n,e^1_{v_n},\dots,e^{k-2}_{v_n}e_n
 \]
is the required $k$-radius sequence for the graph $G$. As $G$ has $\frac{(k+1)kn}{2}$ edges, the length of the sequence is $kn+1=\frac{2}{k+1}e(G)+1$. The set of edges of $G$ is
\begin{align*}
\{ &e_{i-1}e_i,e_{i-1}f_i,e_if_i:\ 1\leq i \leq n\}\\
& \cup \{ e_{v_i}^je_{i-1},e_{v_i}^je_i,e_{v_i}^jf_i:\ 1\leq i\leq n \ {\rm and } \ 1\leq j\leq k-2\}\\
& \cup \{ e_{v_i}^je_{v_i}^\ell:\ 1\leq i\leq n \ {\rm and }\ 1\leq j<\ell \leq k-2\},
\end{align*}
so it is clear that the sequence defined above is a $k$-radius sequence for $G$.

\medskip
To show the converse we assume now that there is a $k$-radius sequence ${\bf x}$ for $G$ of length $\frac{2}{k+1}e(G)+1=kn+1$. Let $E_F$ be the set of vertices in $G$ which are edges in $F$.

We shall prove the following statement first.

\begin{clm} \label{NPclaim}
If two vertices $a,b\in E_F$ in $G$, $a\not=b$, appear exactly once in the sequence ${\bf x}$ and they are at distance at most $k$ in this sequence, then they are at distance exactly $k $ in ${\bf x}$.
\end{clm}

\begin{inproof}
Clearly, the degree of $a$ in $G$ is equal to $2k$, so as $a$ appears only once in ${\bf x}$, all vertices at distance at most $k$ in ${\bf x}$ are neighbors of $a$ in $G$. In particular $b$ is a neighbor of $a$. Thus the edges $a$ and $b$ in $F$ have a common endvertex, say $v$. Let $c$ be the third edge in $F$ incident with $v$.

Observe that $a$ and $b$ have no common neighbors in $G$ except $c,e_v^1,e_v^2,\ldots,e_v^{k-2}$. Indeed, suppose $d$ is a neighbor of both $a$ and $b$ in $G$ different from $c,e_v^1,e_v^2,\ldots,e_v^{k-2}$. By the definition of $F'$, $d$ is an edge in $F$. Moreover, $v$ is not an  endvertex of $d$ in $F$ because the degree of $v$ in $F$ is equal to $3$. Thus, the edges $a$, $b$ and $d$ in $F$ form a triangle, a contradiction.

Suppose some neighbor, say $g$, of $a$ in $G$, which is not a neighbor of $b$ appears at distance at most $k$ in ${\bf x}$ from both $a$ and $b$. Then, as $b$ has $2k$ neighbors in $G$ and it appears only once in ${\bf x}$, some neighbor of $b$ is not at distance at most $k$ from $b$ in ${\bf x}$, a contradiction. Thus, the positions in ${\bf x}$ between the occurrence of $a$ and the occurrence of $b$ must be occupied by the $k-1$ common neighbors $c,e_v^1,e_v^2,\ldots,e_v^{k-2}$ of both $a$ and $b$, which completes the proof of the claim.
\end{inproof}

Let us denote by $A_F$ the set vertices in $E_F\subseteq V(G)$ which appear exactly once in ${\bf x}$ and let $m=|A_F|$. It follows from Claim \ref{NPclaim} that $m\leq n-1$. To see this, observe first that any $a\in A_F$ can appear neither at the first $k$ nor at the last $k$ positions in the sequence ${\bf x}$ because $a$ has $2k$ neighbors in $G$ so there must be $2k$ positions in ${\bf x}$ at distance at most $k$ from the occurrence $a$. Thus, the sequence ${\bf x}$ has $kn+1-2k=(n-2)k+1$ available positions for the members of $A_F$. If $m>n-1$, then there are two members of $A_F$ that are at distance strictly smaller than $k$ in ${\bf x}$ which contradicts Claim \ref{NPclaim}. We have shown that $m\leq n-1$.

Since $\frac{3n}{2}-m$ members of $E_F$ appear at least twice in the sequence ${\bf x}$ (of length $kn+1$) and there are $(k-2)n$ vertices of $G$ which are not in $E_F$,
\[
kn+1\geq m + (k-2)n +2\left(\frac{3n}{2}-m\right) = (k+1)n-m,
\]
so $m\geq n-1$.

We have shown that $|A_F|=m=n-1$. By Claim \ref{NPclaim}, consecutive appearances of vertices of $A_F$ in the sequence ${\bf x}$ are at distance exactly $k$. In other words we proved that there is an ordering of the vertices $a_1,a_2,\ldots,a_{n-1}$ in $A_F$ such that, for $i=1,2,\ldots,n-1$, $a_i$ is at distance equal to $k$ from $a_{i+1}$ in the sequence ${\bf x}$.

As all neighbors in $G$ of each vertex  $a_i\in A_F$ are at distance at most $k$ in ${\bf x}$, the vertices $a_1,a_2,\ldots,a_{n-1}$ form an induced path in the graph $G$. Recall that $G$ is a line graph of $F'$ and $A_F\subseteq E(F)\subseteq E(F')$, so $a_1,a_2,\ldots,a_{n-1}$ are edges of a Hamiltonian path in $F$.
\end{proof}

We proceed to the problem {\kcover{k}}.

{
\begin{theorem}
For all $k\geq 1$ the problem {\kcover{k}} is NP-complete.
\end{theorem}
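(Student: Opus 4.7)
The plan is to prove membership in NP and then show NP-hardness by reduction from Hamiltonian path in cubic triangle-free graphs.

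Membership in NP is routine. For fixed $k$, we have $c_k(G)=O(n^2)$ by trivial constructions, so we may assume $M$ is polynomially bounded. A candidate sequence of length $M$ is guessed nondeterministically and verified in polynomial time by checking that consecutive sets differ by exactly one element and that every edge of $G$ appears as a subset of some $c_i$.

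For NP-hardness when $k\geq 2$, I plan to mimic the construction used for \kradius{k}. Given a cubic triangle-free graph $F$ on $n$ vertices, let $F'$ be obtained from $F$ by attaching $k-2$ pendant edges at each vertex of $F$, and set $G=L(F')$. The target length $M$ is chosen to match the lower bound $\frac{e(G)}{k}+\frac{k+1}{2}$ of \eqref{basic_lower_bound}, possibly adjusted by an additive constant depending on $k$. The forward direction is an explicit construction: follow a Hamiltonian path of $F$ and at each step swap the cache contents from the star at one main vertex to the star at the next, keeping the single bridging edge of $F$ in the cache during the transition; this yields a $k$-cover sequence of length exactly $M$.

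The hard direction is the crux. A tight counting argument, in the spirit of the one establishing \eqref{basic_lower_bound}, shows that attaining $M$ forces every transition after the initial cache to contribute exactly $k$ newly covered edges, so the newly inserted cache vertex must be adjacent in $G$ to all $k$ retained ones. Because $F'$ is triangle-free, every $(k+1)$-clique in $G=L(F')$ is a full star centered at a vertex of $F'$ of degree $k+1$, i.e., at a main vertex of $F$. Hence the cover sequence partitions into ``phases'', each phase building up the cache to the star at a single main vertex of $F$; consecutive phases can only be bridged by the single edge of $F$ that survives in the cache across the transition, and the resulting sequence of bridging edges is forced, by a counting analogue of Claim \ref{NPclaim}, to visit each main vertex of $F$ exactly once---i.e., it traces out a Hamiltonian path in $F$.

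The main obstacle will be the case $k=1$, where \kradius{1} is polynomial (via Eulerian circuits) and the line-graph construction degenerates since $k-2=-1$ makes no sense. For $k=1$ a $1$-cover sequence is essentially a walk in the Johnson graph on $2$-subsets of $V(G)$ whose required vertices are the edges of $G$, and the freedom to evict either element of the 2-cache gives genuinely more flexibility than a $1$-radius sequence. A separate, gadget-based reduction---likely from Hamiltonian cycle in a suitably restricted graph class, using gadgets designed so that the eviction freedom cannot be exploited to shorten the sequence below a Hamiltonian-dependent threshold---is needed. Making the counting work tightly and uniformly, especially for this small case, is the principal technical challenge.
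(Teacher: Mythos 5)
Your plan has two genuine gaps, one fatal as written. First, the case $k=1$ is simply not handled: you correctly observe that the line-graph-of-$F'$ construction degenerates there, but then only assert that ``a separate, gadget-based reduction \dots is needed'' without supplying one, so the theorem remains unproved for $k=1$. The paper's treatment of this case requires no new gadget at all: a $1$-cover sequence is a walk on $2$-subsets in which consecutive sets share an element, so a graph $H$ with $m$ edges admits a $1$-cover sequence of length exactly $m$ (the obvious minimum) if and only if the line graph $L(H)$ has a Hamiltonian path, and the edge Hamiltonian path problem is NP-complete \cite{Be}. The paper then uses \kcover{1} itself as the source problem for the reduction establishing hardness of \kcover{k} for $k\geq 2$, whereas you reduce from Hamiltonian path in cubic triangle-free graphs.

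Second, for $k\geq 2$ your target length is wrong in a way that undermines the whole counting argument. In $G=L(F')$ the lower bound \eqref{basic_lower_bound} is not attainable: as you note, every $(k+1)$-clique of $G$ is the full star at a main vertex of $F$, two distinct such stars share at most one vertex of $G$ (and none if the corresponding vertices of $F$ are nonadjacent), and a newly loaded edge of $F'$ incident to $v_{i+1}$ is adjacent in $G$ only to other edges incident to $v_{i+1}$ (triangle-freeness of $F$). Hence in any transition from the star at $v_i$ to the star at $v_{i+1}$ the $t$-th newly loaded element can cover at most $t$ new edges rather than $k$, so each transition falls short by $\binom{k}{2}$ covers; over $n-1$ transitions the slack is $(n-1)\binom{k}{2}$, which grows linearly with $n$ and is not ``an additive constant depending on $k$.'' Consequently your key claim---that attaining $M$ forces every load to cover exactly $k$ new edges, so that the new element is adjacent to all $k$ retained ones---is false for every feasible $M$, and the phase-structure argument built on it collapses. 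The correct target is a $k$-cover sequence of length $k(n-1)+1$, and the hard direction must then show that such a sequence admits exactly $(n-1)\binom{k}{2}$ wasted covers and that each inter-star transition necessarily consumes at least $\binom{k}{2}$ of them, with strict excess when the two stars sit at nonadjacent vertices of $F$. This loss bookkeeping is exactly the machinery the paper develops (the notion of loss, equation \eqref{loss}, and Claims \ref{claim_N} and \ref{claim_transition} for its own edge gadget), and it is the part your sketch does not supply; without it the reduction is only a plausible plan, not a proof.
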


\begin{proof}
Clearly the problem is in NP. We shall first prove NP-hardness for $k=1$ and then for all $k \geq 2$.

Consider the case $k=1$. Let $H$ be a graph with $m$ edges and by $L(H)$ we denote the line graph of $H$.
Clearly, a 1-cover sequence for $H$ has to have at least $m$ terms. Moreover, $H$ has a 1-cover sequence of length exactly $m$ if and only if $L(H)$ has a Hamiltonian path (this path corresponds to the shortest 1-cover sequence). Since determining the existence of a Hamiltonian path in line graphs is NP-complete \cite{Be}, the problem {\kcover{1}} is NP-complete as well.

Now consider the problem {\kcover{k}} for $k \geq 2$. We shall show a reduction from {\kcover{1}}.
Let $H$ be a graph with $m$ edges and let $N = \binom{k}{2}(m-1)+\binom{k+1}{2}+3$. We define a graph $G$ in which every edge $uv$ of $H$ is replaced by the ,,edge gadget'' $G_{uv}$, depicted in Figure \ref{reduction-bk}.

We introduce the set $K_{uv} = \{x_{uv}^1,x_{uv}^2,\ldots,x_{uv}^k\}$ of $k$ vertices forming a clique. We also add $N-2$ new vertices $y_{uv}^1,y_{uv}^2,\ldots,y_{uv}^{N-2}$. Finally, we add an edge between every vertex from $K_{uv}$ and every vertex from $\{u,v, y_{uv}^1,y_{uv}^2,\ldots,y_{uv}^{N-2}\}$.

Clearly, the graph $G$ has $m \left({k\choose 2}+Nk \right)$ edges.
{
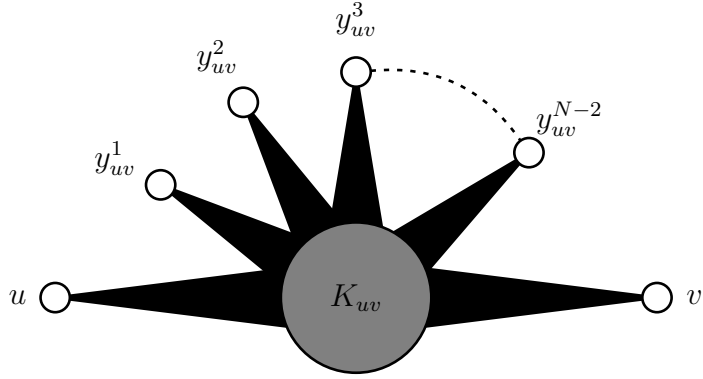
\begin{figure}[h]
\centering
\begin{tikzpicture}[line cap=rect,line width=1pt,scale=1]

\coordinate (u) at (-4,0);
\coordinate (v) at (4,0);
\coordinate (Kuv) at (0,0);

\foreach \i/\ang in {1/150,2/120,3/90,N-2/40}
{
	\draw[fill = black] (\ang:3cm) -- (\ang-20:1cm) -- (\ang+20:1cm)  -- cycle;
	\node[draw, circle,fill=white] (y\i) at (\ang:3cm) {};
	\node at (\ang:3.7cm) {$y_{uv}^{\i}$};
}
\draw[loosely dotted] (y3) to[bend left]  (yN-2);
\draw[fill = black] (u) -- (90:0.5cm) -- (-90:0.5cm)  -- cycle;
\draw[fill = black] (v) -- (90:0.5cm) -- (-90:0.5cm)  -- cycle;
\node[draw, circle, minimum size = 2 cm, fill=gray] at (Kuv) {$K_{uv}$};
\node[draw, circle,fill=white] at (u) {};
\node at (-4.5,0) {$u$};
\node[draw, circle,fill=white] at (v) {};
\node at (4.5,0) {$v$};
\end{tikzpicture}

\caption{The ,,edge gadget'' $G_{uv}$ used in the reduction from {\kcover{1}} to {\kcover{k}} (for $k\geq 2$).}
\label{reduction-bk}
\end{figure}
}

We shall prove that there exists a $k$-cover sequence of length $mN + (m-1)(k-1)$ for the graph $G$ if and only if there exists a 1-cover sequence of length $m$ for the graph $H$.

Assume first that $H$ has a 1-cover sequence $\bf x$ of length $m$. As we observed before, it is a sequence of edges of $H$, such that every edge of $H$ appears exactly once in $\bf x$ and consecutive edges are adjacent. Let ${\bf x} = e_1,e_2,e_3,\ldots,e_m$.

Consider an edge $uv$ of $H$ and the following sequence of $(k+1)$-element sets
\[
{\bf c}_{uv}=K_{uv}\cup\{ u \},K_{uv}\cup\{ y_{uv}^1\},K_{uv}\cup\{ y_{uv}^2\},\ldots,K_{uv}\cup\{ y_{uv}^{N-2}\},K_{uv}\cup\{ v \}.
\]
Obviously, the sequence ${\bf c}_{uv}$ is a $k$-cover sequence for the graph $G_{uv}$ of length $N$. Unfortunately, the concatenation of the sequences ${\bf c}_{uv}$ (consistent with the order of edges in $\bf x$) is not a $k$-cover sequence for the whole graph $G$ because for consecutive edges $uv$ and $vw$ in $\bf x$, the last term of ${\bf c}_{uv}$ differs by more than one element from the first term of ${\bf c}_{vw}$. Therefore, we have to use some auxiliary ,,connector'' sequences, each of length $k-1$:
\begin{multline*}
{\bf d}_{uv \to vw}= (K_{uv} \setminus \{x_{uv}^1\}\cup \{v,x^1_{vw}\}),(K_{uv} \setminus \{x_{uv}^1,x_{uv}^2\}\cup \{v,x^1_{vw},x^2_{vw}\}),\ldots,\\
(\{x_{uv}^{k}\} \cup ((K_{vw} \setminus \{x_{vw}^{k}\})\cup\{v\})).
\end{multline*}
It is straightforward to check that the sequence
\[
{\bf c}_{e_1},{\bf d}_{e_1 \to e_2},{\bf c}_{e_2},{\bf d}_{e_2 \to e_3},\ldots,{\bf d}_{e_{m-1}\to e_m},{\bf c}_{e_m}
\]
is a $k$-cover sequence for the graph $G$. The length of this sequence is $mN + (m-1)(k-1)$, as claimed.

Before showing the converse implication we need to give a few definitions. Let ${\bf c}=c_1,\ldots,c_s$ be a $k$-cover sequence for the graph $G$. We will say, that a vertex $v\in V(G)$ is {\it loaded} (or {\it loaded into cache}) at step $i>1$ if $c_i \setminus c_{i-1}=\lbrace v\rbrace $. An edge $vu\in E(G)$ is {\it covered} at step $i$ if $\lbrace v,u\rbrace \subseteq c_{i}$ and, we loaded $v$ or $u$ at step $i$.

We say, that a pair $uv$ of vertices of $G$ is a {\it loss} at step $i>1$, if $\lbrace u,v\rbrace\subseteq c_i$, $\lbrace u,v\rbrace\not\subseteq c_{i-1}$ and either (a) $uv\notin E(G)$ or (b) $\lbrace u,v\rbrace \subseteq c_{j}$, for some $j<i-1$. (Intuitively, each time we load to our cache a vertex which does not form an edge in $G$ with some vertex already in the cache or forms an edge that has been covered before, we report a loss.) Moreover, a pair $uv$ is a loss at step $1$, if $\lbrace u,v\rbrace\subseteq c_1$ and $uv\notin E(G)$. Let $\ell({\bf c})$ be the sum of the numbers of losses at steps $1,2,\ldots,m$.

Note, that it follows from the definitions above that if ${\bf c}$ is a $k$-cover sequence for $G$, then
\begin{equation}\label{loss}
e(G)+\ell({\bf c})=k(s-1)+{k+1\choose 2}.
\end{equation}

Assume now that ${\bf c}$ is a $k$-cover sequence for $G$ of length $s=mN + (m-1)(k-1)$. Since $e(G)=m\left ( \binom{k}{2} + Nk \right)$, the equality (\ref{loss}) implies that
\begin{equation}\label{number of losses}
\ell({\bf c})={k\choose 2}(m-1).
\end{equation}

We shall show two claims.
\begin{clm}\label{claim_N}
For every $e \in E(H)$, there is a term $c_j$ of ${\bf c}$ such that $c_j\subseteq V(G_e)$.
\end{clm}
\begin{inproof}
Suppose no term of ${\bf c}$ is contained in $V(G_e)$. Then, at each step when any of the vertices $y^1_e,\ldots,y^{N-2}_{e}$ is loaded, we generate a loss (because all neighbors of each $y^i_e$ are in $V(G_e)$). Thus, the total number of losses $\ell({\bf c})\geq N-2-{k+1\choose 2}>{k\choose 2}(m-1)$, a contradiction with the equality (\ref{number of losses}).
\end{inproof}

\begin{clm}\label{claim_transition}
For $e,f \in E(H)$, $e \neq f$, let $c_i$ and $c_j$, $i<j$, be terms of ${\bf c}$ contained in $V(G_e)$ and $V(G_f)$, respectively. Then the total number of losses at steps $i+1,i+2,\ldots,j$ is at least ${k\choose 2}$.
Moreover, if $e \cap f = \emptyset$, then the total number of losses in these steps is larger than ${k\choose 2}$.
\end{clm}
\begin{inproof}
By the construction of $G$, $|V(G_e)\cap V(G_f)|\leq 1$ and $V(G_e)\cap V(G_f)\neq\emptyset$ if and only if $e \cap f \neq \emptyset$. Thus, there are at least $k$ different vertices which are not in $V(G_e)$ that are loaded at steps $i+1,i+2,\ldots,j$.

Let $w_1,\ldots,w_{k}\not\in V(G_e)$ be such vertices which are loaded first (i.e. $w_1\not\in V(G_e)$ is loaded first in steps $i+1,i+2,\ldots,j$, $w_2\not\in V(G_e)$ is loaded second, etc.).

Observe that when we load some vertex $w_t$, $1\leq t\leq k$, into the cache, at least $k-t+1$ vertices of the cache are members of $V(G_e)$. By the construction of $G$ again, vertices which are not in $V(G_e)$ have at most one neighbor in $V(G_e)$ (otherwise $H$ has multiple edges). Hence, when we load $w_t$, we generate at least $k-t$ losses. Consequently, the total number of losses when we load  $w_1,\ldots,w_{k}$ is at least $\sum_{t=1}^{k}(k-t)={k\choose 2}$ which proves the first part of the claim.

To show the second part suppose on the contrary that $e \cap f = \emptyset$ and we have exactly ${k\choose 2}$ losses at steps $i+1,i+2,\ldots,j$. Let $e = uv$ and $f=u'v'$. Then, when we load each $w_t$, we generate exactly $k-t$ losses. In particular, loading $w_1$ generates exactly $k-1$ losses. This can happen only if $w_1$ has a neighbor in $V(G_e)$. By construction of $G$, this neighbor can be either $u$ or $v$ -- without loss of generality assume it is $v$. Moreover, we know that $w_1 \in K_{e'} \subseteq G_{e'}$, for some $e' \in E(H)$ such that $v \in e'$. Clearly, the vertex $v$ is the only neighbor of $w_1$ in $V(G_e)$. In general, when we load any vertex $w_t$, to generate no more than $k-t$ losses, the vertices $v,w_1,\ldots,w_{t-1}$ must be neighbors of $w_t$. But the only common neighbors of $v$ and $w_1$ are members of $K_{e'}$, so $w_1,\ldots,w_{k}\in K_{e'} \subseteq V(G_{e'})$.

We have shown that $c_s=\{ v,w_1,\ldots,w_{k}\}\subseteq K_{e'}$, for some $s$, where $i+1\leq s \leq j$. Clearly, $e' \not= f$ because $v \in e \cap e'$, while $e \cap f = \emptyset$. Thus, by the first part of the claim there are at least ${k\choose 2}\geq 1$ losses at steps $s+1,s+2,\ldots,j$ so the total number of losses at steps $i+1,i+2,\ldots,j$ is larger than ${k\choose 2}$. This contradiction completes the proof of the claim.
\end{inproof}

\bigskip
By Claim \ref{claim_N}, there is an ordering $e_1,e_2,\ldots,e_m$ of the edges of $H$ and indices $j_1<j_2<\ldots<j_m$ such that $c_{j_i}\subseteq V(G_{e_i})$, for $i=1,\ldots,n$. We observe that $e_1,e_2,\ldots,e_m$ are consecutive vertices of a 1-cover sequence for $H$. Indeed, if $e_i \cap e_{i+1} = \emptyset$, for some $i$, then by Claim \ref{claim_transition}, $\ell({\bf c})>{k\choose 2}(m-1)$ which contradicts the equality (\ref{number of losses}).
\end{proof}
}

\section{Concluding remarks and open problems}

Theorem \ref{theorem_KmnUpper} gives a good estimate of $f_k(K_{m,n})$ when both $m$ and $n$ are large (tending to infinity), which leaves the following question open: what happens when only one of these parameters goes to infinity? This problem probably cannot be solved using the same proof technique (in particular, the assumption (2) of Frankl-R\"odl theorem will not be satisfied) and it is not clear what the precise answer should be.

\begin{problem}
What is an asymptotically tight estimate on $f_k(K_{m,n})$ when $m$ is constant and $n$ tends to infinity?
\end{problem}

We would like to see an analog of Theorem \ref{theorem_KmnUpper} for other classes of graphs -- in particular, for complete $t$-partite graphs $K_{n_1,n_2,\ldots,n_t}$
(for fixed $t$ and large $n_1,n_2,\dots,n_t$). Proofs of Proposition \ref{norm}, Corollary \ref{cor1} and Theorem \ref{upper} can be adapted to the $t$-partite case, but the main difficulty is determining (an analog of) the constant $a_k$.

Since this problem seems to be interesting on its own, we will make it more precise. An unordered pair $ij$, where $i\neq j$, is a {\it $k$-bad} pair in a $t$-ary sequence ${\bf r}$ if $\left|j-i\right|\leq k$ and $r_i=r_j$. Let $w_{k,t}(s)$ be the minimum number of $k$-bad pairs in a $t$-ary sequence of length $s$ and set $a_{k,t}=\lim_{s\rightarrow \infty}\frac{w_{k,t}(s)}{s}$. The existence of this limit follows from an analog of Proposition \ref{norm} for $t$-ary sequences.

\begin{problem}
What is a nontrivial estimate on $a_{k,t}$?
\end{problem}

Note that Corollary \ref{cor2} gives a value of $a_{k,2}$ that is accurate only up to $\frac{1}{2}$, which means that we do not know the exact value of the constant $d_k$ in Theorem \ref{theorem_KmnUpper}. We know the values of $a_{k,2}$ for $k\leq 5$ (see Table \ref{tabelka}) and it would be interesting to find a precise formula for $k>5$.

Another interesting direction is to investigate the ratio of $f_k(G)$ and $c_k(G)$ for various graphs $G$ and fixed $k$. Recall that $\frac{f_k(K_n)}{c_k(K_n)}\to 1$ for $n\to \infty$ and $\frac{f_k(K_{n,m})}{c_k(K_{n,m})}\to d_k$ for $n,m\to\infty$ (and $d_k\approx 1.7$ for large $k$). It would be nice to know how large this ratio can be for other graphs. In particular the following question seems to be interesting.


\begin{problem}
What is the asymptotic behavior of the function
\[
g_k(n)=\max\left\{\frac{f_k(G)}{c_k(G)}:\ G\ {\rm has}\ n\ {\rm vertices}\right\},
\]
for a fixed $k$?
\end{problem}

It is not hard to show that $\lim_{n\rightarrow \infty}g_1(n)=\frac{3}{2}$ but we do not know anything about the behavior of $g_k(n)$ for $k>1$.

\section{Acknowledgments}
We are grateful to anonymous reviewers for their remarks that allowed us to improve the presentation of the results included in this paper.

\end{document}